\documentclass[11pt]{article}

\usepackage{fullpage}
\usepackage{hdb_macros}
\usepackage{tikz}

\usepackage[multiuser,inline,nomargin]{fixme}
\fxusetheme{color}

\FXRegisterAuthor{z}{ee}{\color{cyan}Zeyong}
\FXRegisterAuthor{s}{es}{\color{red}Surendra}

\newcommand{\prob}[1]{\textsc{#1}}

\title{Hierarchies within \cc{TFNP}: building blocks and collapses}
\author{
    Surendra Ghentiyala\thanks{Cornell University. Email: \email{sg974@cornell.edu}. This work is supported in part by the NSF under Grants Nos.~CCF-2122230 and CCF-2312296, a Packard Foundation Fellowship, and a generous gift from Google.} 
    \and
    Zeyong Li\thanks{National University of Singapore. Email: \email{li.zeyong@u.nus.edu}. Supported by NRF grant NRF-NRFI09-0005.}
}

\date{}
\sloppy
\begin{document}
\pagenumbering{roman}
\maketitle

\begin{abstract}
    In all well-studied $\mathsf{TFNP}$ subclasses (e.g. $\mathsf{PPA}, \mathsf{PPP}$ etc.), the canonical complete problem takes as input a polynomial-size circuit $C: \{ 0, 1\}^n \rightarrow \{ 0, 1\}^m$ whose input-output behavior implicitly encodes an exponentially large object $G$, i.e. $C$ is the succinct (polynomial-size) representation of the exponential size object $G$. The goal is to find some particular substructure in $G$ which can be confirmed in polynomial time using queries to $C$. 
    
    While such formulations have proven fruitful in the $\mathsf{TFNP}$ literature, it is arguably insufficient to characterize much of $\mathsf{TFNP}$. For example, for any object $G$ whose succinct description requires one to factor integers, it seems we cannot represent it by a circuit $C$ under the widely believed assumption that $\mathrm{Factor} \notin \P$.

    To address this, we initiate the study of classes of the form $\mathsf{A}^{\mathsf{B}}$ where both $\mathsf{A}$ and $\mathsf{B}$ are $\mathsf{TFNP}$ subclasses. In particular, we define complete problems for these classes that take as input a circuit $C$ which is allowed oracle gates to another $\mathsf{TFNP}$ class. For example, $\mathsf{PPP}^{\mathsf{PPA}}$ would involve finding a collision in a circuit $C^{\mathsf{PPA}}: [N] \rightarrow [N-1]$ where $C$ has oracle gates to a $\mathsf{PPA}$-complete problem. We can then iterate this construction to obtain hierarchies (e.g. $\mathsf{PPP^{(PPP^{PPP})}}$). Here, we uncover a rich structure of hierarchies and collapses. However, these definitions require some care since, unlike a class like $\mathsf{PPP^{NP}}$, where the $\mathsf{NP}$ oracle defines a function, in $\mathsf{PPP^{PPA}}$, the oracle is for a search problem with many possible solutions. Intuitively, the definitions we introduce quantify over all possible instantiations of the $\mathsf{PPA}$ oracle. The hierarchies we obtain are contained in $\cc{TFNP}$ and therefore much lower than the other generalization of $\mathsf{TFNP}$ subclasses ($\mathsf{TFPH}$) recently defined in Kleinberg, Korten, Mitropolsky, and Papadimitriou (ITCS'21).

    Beyond introducing definitions for $\mathsf{TFNP}$ oracle problems, our specific technical contributions include showing that several $\mathsf{TFNP}$ subclasses are self-low and hence their corresponding hierarchies collapse. In particular, $\mathsf{PPA^{PPA}} = \mathsf{PPA}$, $\mathsf{PLS^{PLS}} = \mathsf{PLS}$, and $\mathsf{LOSSY^{LOSSY}} = \mathsf{LOSSY}$. As an immediate consequence, we derive that when reducing to $\mathsf{PPA}$, one can always assume access to $\mathsf{PPA}$---and therefore factoring---oracle gates.
    
    In addition to introducing a variety of hierarchies within $\mathsf{TFNP}$ that merit study in their own right, these ideas introduce a novel approach for classifying computational problems within $\mathsf{TFNP}$ and proving black-box separations. For example, we observe that the problem of deterministically generating large prime numbers, which has long resisted classification in a $\mathsf{TFNP}$ subclass, is in $\mathsf{PPP^{\mathsf{PPP}}}$ under the Generalized Riemann Hypothesis.
\end{abstract}


\thispagestyle{empty}
\newpage

\tableofcontents
\newpage
\pagenumbering{arabic}

\section{Introduction}

The complexity class $\cc{TFNP}$ (Total Functions in \NP) consists of search problems where a solution is guaranteed to exist and solutions can be verified in polynomial time. Although $\cc{TFNP}$ is a semantic class of problems, many syntactic subclasses of $\cc{TFNP}$ have been identified and studied, forming a rich `ecosystem' within $\cc{TFNP}$.

These syntactic subclasses typically correspond to the mathematical principle proving their totality. More specifically, the canonical complete problem for a $\cc{TFNP}$ subclass typically takes as input a polynomial-size circuit $C:\bit^n \rightarrow \bit^m$ encoding an exponentially large combinatorial object $G$. The goal is to find some particular substructure in $G$ whose existence is guaranteed by the underlying combinatorial principle. 

Take \cc{PPP} (Polynomial Pigeonhole Principle) as an illustrative example. The subclass and its canonical complete problem \prob{Pigeon} is defined as follows: 
\begin{definition}[$\prob{Pigeon}$]
    \label{def: pigeon}
    Given a $\poly(n)$ size circuit $C: \{ 0, 1\}^n \rightarrow \{ 0, 1\}^n$, output one of the following.
    \begin{enumerate}
        \item $x$ s.t. $C(x) = 0^n$;
        \item distinct $x_1, x_2$ s.t. $C(x_1) = C(x_2)$.
    \end{enumerate}
    \cc{PPP} is defined as all search problems which are many-one reducible to \prob{Pigeon}.
\end{definition}

Notice that a solution to \prob{Pigeon} always exists since if $C$ is surjective, then a type 1 solution exists, and if $C$ is not surjective, a type 2 solution exists by the pigeonhole principle. \prob{Pigeon} therefore always has a solution and is what we refer to as a total function problem. Furthermore, any solution to \prob{Pigeon} is clearly efficiently verifiable since it simply requires evaluating $C$ at most twice followed by checking equality of two polynomial length strings. 

\cc{PPP} is only one of many fundamental \cc{TFNP} subclasses, others include \cc{PPA}, \cc{PPAD}, \cc{PPADS}, and \cc{PLS}. Despite being defined with respect to circuit problems (which are interpreted as unnatural problems), these subclasses prove to be important as they capture the complexity of many natural computational problems. To name a few:
\begin{itemize}
    \item Computing a Nash equilibrium is complete for $\cc{PPAD}$ \cite{DGP09,CDT09}.
    \item \cc{PPP} captures the complexity of many cryptographic primitives \cite{SSZ18}.
    \item \cc{PLS} captures the complexity of searching for a local optimum \cite{Kre90,Sch91}.
\end{itemize}

While the identification and study of these subclasses under such formulations has greatly enhanced our understanding of $\cc{TFNP}$, we argue that it is insufficient to characterize ``most totality arguments in $\cc{TFNP}$''. For example, the well-known problems of deterministically generating large primes (which exist by the Bertrand–Chebyshev theorem) and finding a monochromatic clique in the edge coloring of an exponentially large graph (which exists by Ramsey's theorem) have yet to find a home inside $\cc{TFNP}$. (Our hierarchies will allow us to find a home for the former problem in \cref{sec: consequences}).  

In fact, a very general class of constructive principles which seem very $\cc{TFNP}$-like do not immediately have a home in any $\cc{TFNP}$ subclass. Consider the following scenario. Let $G:[2^n] \rightarrow [2^n-1]$ be a function where $G(x)$ can be evaluated in $\poly(n)$ time given access to a $\prob{Factor}$ oracle. The task is to find a collision in $G$, i.e. find $x \neq y$ such that $G(x) = G(y)$. On one hand, under the widely believed assumption that $\prob{Factor} \notin \P$, we do not know how to evaluate $G$ using a vanilla circuit and reduce it to (some variant of) $\prob{Pigeon}$. On the other hand, since both $\prob{Factor}$ and $\prob{Pigeon}$ are in $\cc{TFNP}$, this task intuitively ``belongs to'' $\cc{TFNP}$. It remains total since the pigeonhole principle tells us that $G$ has a collision, regardless of whether it is efficiently computable or not. Moreover, one can efficiently verify a collision in $G$ given $x$, $y$, and the solutions to the $\prob{Factor}$ oracle calls. (The observant reader may notice that this definition is more subtle than it appears upon first inspection since the output of a $\prob{Factor}$ gate is not well-defined for integers with multiple prime factors, but we will deal with this appropriately in \cref{sec: def}.)

This raises an unsettling possibility: there may exist rather natural problems in $\cc{TFNP}$ that we do not have the right tools to define, let alone study. A natural solution to this issue is to allow circuits with oracle gates to $\cc{TFNP}$ problems as input. Consider the above example: $G$ can be succinctly represented by a polynomial-size oracle circuit $C^{\prob{Factor}}:\bit^{n} \rightarrow [2^n - 1]$ where the circuit is allowed oracle gates solving $\prob{Factor}$. One could now interpret this circuit as an input and ask for a collision!

While oracle separations between \cc{TFNP} subclasses have been well-studied \cite{beame1995relative, moriokaclassification, buresh2004relativized, goos2024separations, jain2024pigeonhole}, the idea of raising \cc{TFNP} subclasses to oracles and studying the resulting complexity classes in their own right is relatively new. \cite{kleinberg2021total} introduced $\cc{TFNP^{\Sigma_i^P}}$ as a total function analogue of \cc{PH}. They also considered complexity classes like \cc{PPP^{\Sigma_i^P}}, where the input circuit $C$ in \cref{def: pigeon} is allowed to have $\cc{\Sigma_i^P}$ oracle gates.


Although $\cc{TFNP}$ has only recently begun to be studied through the lens of oracles, oracles are ubiquitous in other areas of complexity theory. For example, the polynomial hierarchy \cc{PH}, which can be defined in terms of oracles ($\cc{\Sigma_0^P} = \cc{P}, \cc{\Sigma_{i+1}^P} = \cc{NP^{\Sigma_i^P}}$), is related to a wide variety of complexity classes and important problems. 
Just to name a few, the Sipser-Lautemann theorem \cite{lautemann1983bpp} states that $\cc{BPP} \subseteq \cc{\Sigma_2^P} \cap \cc{\Pi_2^P}$, and the circuit minimization problem is known to be in \cc{\Sigma_2^P} but not known to be in \cc{NP}. Furthermore, the assumption that the polynomial hierarchy does not collapse---that $\cc{\Sigma_i^P} \neq \cc{\Sigma_{i+1}^P}$ for any integer $\cc{i}>0$---is now a standard assumption in complexity theory.

In this work, we introduce \cc{TFNP} subclasses that have oracle access to some \cc{TFNP} problem. For example, \cc{PPP^{PPP}} would have as its complete problem $\prob{Pigeon}^\prob{Pigeon}$ which is the same as \prob{Pigeon} except that the input circuit $C$ is allowed to have \prob{Pigeon} oracle gates. While the intuition is straightforward, the actual definition requires some care since, unlike a $\cc{\Sigma_i^P}$ oracle gate, which encodes a decision problem with exactly one output, a \prob{Pigeon} instance may have an exponential number of possible outputs since $\prob{Pigeon}$ defines a relation rather than a function. The behavior of the \prob{Pigeon} oracle gate is therefore underspecified. Hence, it is somewhat unclear exactly how one should interpret the oracle gates unless the oracle problem is a search problem with unique solutions (like finding \emph{all} prime factors of an integer). Informally, we will resolve this issue by quantifying over all possible functions consistent with the relation defined by the oracle problem. See \cref{sec: def} for a detailed discussion. 

Under the definition above, one can naturally define hierarchies of subclasses within $\TFNP$, such as the $\cc{PPA}$ hierarchy defined as $\cc{PPA}^1 = \cc{PPA}, \cc{PPA}^i = \cc{PPA}^{\cc{PPA}^{i-1}}$ for $i > 0$, and ask about the complexity of these hierarchies.
We further show that, under our definition, several classic \cc{TFNP} subclasses are self-low. That is, their corresponding hierarchies collapse to the first level.

Our ideas explore a new dimension in the complexity landscape within $\cc{TFNP}$. This opens up the exciting possibility of analogies to $\cc{PH}$. Our hierarchies hint at hitherto unasked questions about the structure of the most well-studied $\cc{TFNP}$ subclasses. Which subclass is self-low? How do the different subclasses interact with each other when given as an oracle? Are there Karp-Lipton type collapses within the $\cc{TFNP}$ hierarchies? What should standard assumptions look like in the $\cc{TFNP}$ hierarchies world (what is the equivalent of the assumption that $\PH$ does not collapse)?

\subsection{Our Contributions} 

\paragraph{Definitions.} Our first contribution is a robust definition for a new family of subclasses in \cc{TFNP}. We propose a definition of problems taking the form of $\prob{A}^{\prob{B}}$ when $\prob{A}$ is a \cc{TFNP} circuit problem (\cref{def: circuit_prob}) and \prob{B} is any \cc{TFNP} problem (\cref{def: A^B}, \cref{def: A^B_classes}). Informally, a circuit problem is one where the input consists of a circuit $C$ and some other input $a$, and an answer to the problem can be verified using black-box queries to $C$. $\prob{A}^{\prob{B}}$ takes as input a $\poly(n)$ size circuit $C^{\prob{B}}$ with oracle gates for $\prob{B}$ and some other input $a$, and outputs a solution $y, w_{1, 1}, \dots, w_{\poly(n), \poly(n)}$. Informally, one should view $y$ as a solution to $\prob{A}$ on input $(C^{\prob{B}}, a)$. However, verifying $y$ as a solution requires evaluating $C^{\prob{B}}$, which requires evaluating $\prob{B}$ oracle gates. This is where $w_{1, 1}, \dots, w_{\poly(n), \poly(n)}$ come in. We use $w_{i, j}$ as the solution to the $j^{\text{th}}$ $\prob{B}$ oracle gate query on the $i^{\text{th}}$ time the verifier for $\prob{A}$ makes a query to $C^{\prob{B}}$. In some sense, the solution to $\prob{A}^{\prob{B}}$ on input $(C^{\prob{B}}, a)$ has the same form as the solution to $\prob{A}$ but also includes the auxiliary information required to evaluate $C^{\prob{B}}$ to verify a solution.

\paragraph{Robustness of Definition.} While such a definition appears straightforward on a high level, it is indeed delicate due to the fact that a $\cc{TFNP}$ problem defines a relation and not a function. In order to obtain a robust definition, we introduce an additional property called \emph{Internal Consistency}, which intuitively enforces the oracle to behave as a function from the perspective of the (polynomially bounded) verifier. 

With this in mind, we show several desirable properties of our definition, indicating that we have indeed arrived at the ``correct'' definition of $\prob{A}^{\prob{B}}$. Assume $\prob{A}$ is a \cc{TFNP} circuit problem (\cref{def: circuit_prob}) and \prob{B} is any \cc{TFNP} problem, and \cc{A} and \cc{B} are the set of total search problems reducible to \prob{A} and \prob{B} respectively. By defining $\cc{A^B}$ as the set of total search problems reducible to $\prob{A}^{\prob{B}}$ under many-one reductions, we observe that $\cc{A^B}$ is in \cc{TFNP} (\cref{obv: A^B_in_TFNP}) and that \cc{A^B} is robust to the choice of complete problem for \cc{A} or \cc{B} (\cref{thm: A_to_A'}, \cref{thm: B_to_B'}) up to $\cc{FP^B}$ reductions. 

Having defined $\cc{TFNP}$ oracle classes, we can now define whole hierarchies.
\begin{restatable*}{definition}{hierarchy}
    \label{def: hierarchy}
    Let $\cc{A}$ be a $\cc{TFNP}$ subclass and $\prob{A}$ be the canonical $\cc{A}$-complete circuit problem. We define $\prob{A}^1 = \prob{A}$, $\prob{A}^i = \prob{A}^{(\prob{A}^{i-1})}$, and $\cc{A}^i$ as all problems which have a many-one reduction to $\prob{A}^i$. The $\cc{A}$ hierarchy is defined as follows.
    \[ \cc{A}^* = \bigcup_{i \in \mathbb{N}} \cc{A}^i \]
\end{restatable*}

\paragraph{Main Theorems.} Our second contribution is a series of structural results:
\begin{restatable*}{theorem}{ppa}
\label{thm: ppa_self_low}
    $\cc{PPA}^* = \cc{PPA}$.
\end{restatable*}

\begin{restatable*}{theorem}{pls}
\label{thm: pls_self_low}
    $\cc{PLS}^* = \cc{PLS}$.
\end{restatable*}

\begin{restatable*}{theorem}{lossy}
\label{thm: lossy_self_low}
    $\cc{LOSSY}^* = \cc{LOSSY}$.
\end{restatable*}
Besides defining hierarchies within $\cc{TFNP}$, we also study the structural properties of these new hierarchies. To our surprise, we find that---unlike the polynomial hierarchy--- the $\cc{PPA}$, $\cc{PLS}$, and $\cc{LOSSY}$ hierarchies all collapse to their first levels. 
While these collapses might be perceived as straightforward conceptually (e.g. the conceptual idea behind $\cc{PLS}^\cc{PLS} = \cc{PLS}$ can be found in the language of bounded arithmetic in \cite{BussKrajicek94}), the actual proof requires great care (again due to the fact that a $\cc{TFNP}$ problem defines a relation and not a function). In fact, we apply different technical treatments for each of these three classes, leveraging the unique properties of each class.

One consequence of our collapses is that when reducing a problem to any of these classes, we are free to assume oracle access to any problem in that same class. This indicates that these three classes are more powerful than previously imagined. These results also stand in stark contrast to the fact that $\cc{PPP}$ is not even Turing-closed under black-box reductions \cite{fleming2024black}, a seemingly much weaker property than self-lowness.

\paragraph{Further Applications.} In \cref{sec: consequences}, we show how to apply our new \cc{TFNP} subclasses to study the complexity of well-known problems. One of the consequences of the fact that \cc{PPA} is self-low and that factoring is (likely) in \cc{PPA} means that one can generally assume access to a factoring oracle when reducing a problem to a \cc{PPA}-complete problem (assuming the Generalized Riemann Hypothesis). As an instantiation of this technique, we show \cref{thm: bertrand_win_win}. Let $\prob{Factor}$ be the problem of finding a non-trivial prime factor of an integer (or declaring none exist) and let \prob{Weak-Bertrand} be the problem of generating a prime between $2^n$ and $2^{32n}$ given $1^n$ as input. The following two theorems may provide a new way of attacking the longstanding open problem of pinpointing the complexity of \prob{Weak-Bertrand} in \cc{TFNP}.
\begin{restatable*}{theorem}{primesPPA}
    \label{thm: bertrand_win_win}
    If \prob{Weak-Bertrand} is in $\cc{PPA}^{\prob{Factor}}$, then the Generalized Riemann Hypothesis implies that \prob{Weak-Bertrand} is in $\cc{PPA}$.
\end{restatable*}

\begin{restatable*}{theorem}{primes}
    \label{thm: primes_in_4}
    Under the Generalized Riemann Hypothesis, \prob{Weak-Bertrand} is in $\cc{LOSSY^{PPA}}$, $\cc{LOSSY^{PPP}}$, $\cc{PPADS^{PPA}}$, and $\cc{PPADS^{PPP}}$.
\end{restatable*}

Finally, we note that self-lowness is a property that is preserved under black-box reductions (see \cref{thm: self_low_sep}). In particular, if $\cc{A}$ is self-low under black-box reductions and $\cc{B}$ is not self-low under black-box reductions, then $\cc{A}$ and $\cc{B}$ are separate under black-box reductions.

\subsection{Utility of $\cc{TFNP}$ hierarchies}

We now discuss how our proposed theory of hierarchies fits into the existing $\cc{TFNP}$ literature and how it may act as a framework to further our understanding of $\cc{TFNP}$.

\paragraph{Power of \cc{TFNP}}

Our understanding of $\cc{TFNP}$ was previously relatively flat: focused on containments/separations/intersections between subclasses. We ask ourselves about the new dimension of hierarchies and primarily explore collapses and self-lowness in this work. As noted above, we hope that future work will use our definitions to explore topics beyond self-lowness. In particular, one can imagine asking if analogous theorems for other hierarchies (e.g. $\cc{PH}$) hold in the $\cc{TFNP}$ world.

On a philosophical note, hierarchies within $\cc{TFNP}$ differ from $\cc{PH}$ in at least one important respect from our perspective. Although the notion of $\cc{TFNP^{TFNP}}$ does not mean much formally as $\cc{TFNP}$ is not believed to have complete problems, our work can be taken to say something along the lines of $\cc{TFNP^{TFNP}} = \cc{TFNP}$ (this is very much believed not to be the case for $\cc{NP}$, which forms the base of $\cc{PH}$). In particular, we see in \cref{sec: def} that oracle access to a $\cc{TFNP}$ problem does not give you any power beyond $\cc{TFNP}$ since solutions to oracle queries always have witnesses (which is not true for $\cc{NP}$).

One could draw an analogy from our self-lowness results to Turing-closure, a concept closely related to self-lowness. We say that a \cc{TFNP} subclass $\cc{A}$ is Turing-closed if $\cc{FP}^{\cc{A}} = \cc{A}$, or equivalently, the existence of a Turing reduction to \cc{A} implies the existence of a many-one reduction to \cc{A}. It is not hard to see that for any nontrivial class $\cc{A}$, self-lowness ($\cc{A^A} = \cc{A}$) implicitly requires $\cc{A}$ to be Turing-closed in the first place, since simply evaluating the input circuit to $\cc{A^A}$ is a $\cc{FP}^{\cc{A}}$ problem. In other words, self-lowness is a stronger property than Turing-closure under our definition (see \cref{obv: ppp_not_selflow}) and the self-lowness of a class indicates that it has much more power than previously believed (more so than Turing-closure).

The Turing-closure of \cc{PLS}, \cc{PPA}, \cc{PPAD}, and \cc{PPADS} was shown in \cite{buss2012propositional}. \cc{LOSSY} was shown to be Turing-closed in \cite{li2024distinguishing}. On the other hand, \cite{fleming2024black} showed that \cc{PPP} is not Turing-closed under black-box reductions. 


\paragraph{Classification of problems and upper bounds}
We are now equipped with a more powerful tool for classifying computational problems into subclasses of $\cc{TFNP}$. In particular, we are essentially free to use any $\cc{TFNP}$ subclass as a subroutine -- just invoke the corresponding oracle. Moreover, the self-lowness of, e.g. $\cc{PLS}$, tells us that if we are reducing to $\cc{PLS}$, a $\cc{PLS}$ oracle is truly free.

\cite{korten2022derandomization} considered the problem \prob{Lossy} with access to an \prob{MCSP} (minimum circuit size problem) oracle and showed that solving $\prob{Lossy}^{\prob{MCSP}}$ allows one to generate hard truth tables. This is closely related to our work, except \prob{MCSP} is not a \cc{TFNP} problem. We note that since $\prob{MCSP}$ is a decision problem, the reduction of \cite{korten2022derandomization} to $\prob{Lossy}^{\prob{MCSP}}$ does not need our definitions from \cref{sec: def}. \cite{korten2022derandomization} also considered \prob{Lossy} with a factoring oracle. They showed that if one can solve $\prob{Lossy}^{\prob{AllFactor}}$ (where $\prob{AllFactor}$ is the problem of outputting the full prime factorization of a number), then one can deterministically generate large primes, a longstanding open problem. Here again, the setting of \cite{korten2022derandomization} is more straightforward than our setting since one can simply assume an oracle gate to a problem with unique solutions. They did not consider the case where the search problem associated to the oracle is simply in \cc{TFNP}.

\paragraph{Separations and lower bounds}
We introduce new techniques for proving separations and lower bounds between subclasses of $\cc{TFNP}$:
\begin{itemize}
    \item If some subclasses form a true hierarchy, this gives us even more evidence that the class does not equal $\cc{FP}$. Since this would collapse the hierarchy, much in the style of $\cc{PH}$.
    \item If we find that $\cc{A}$ is self-low and $\cc{B}$ is not self-low (see \cref{thm: self_low_sep}), then we immediately have a black-box separation. This opens up the possibility to self-lowness being a property used to cleave $\cc{TFNP}$ into two parts: the self-low classes and the non-self-low classes. A similar approach has proven fruitful in the case of the abundance property \cite{li2024total}. \cite{li2024total} shows that problems which are not ``abundant'' in solutions (what they call semi-gluable) are not reducible to those which are ``abundant'' in solutions, thereby splitting a large chunk of $\cc{TFNP}$ into semi-gluable and abundant sections. 
    \item One could leverage properties of both $\cc{A}$ and $\cc{B}$ to separate the class $\cc{A^B}$ from other classes. \cite{li2024metamathematicsresolutionlowerbounds} studied the $\cc{TFNP}$ class in the decision-tree model coined as rwPHP(\cc{PLS}), which are problems reducible to the retraction weak pigeonhole principle where the retraction function is in \cc{PLS}. To a certain extent one could interpret this class as $\cc{LOSSY}^\cc{PLS}$ in our language, but in the decision-tree model, or in the fully black-box setting. \cite{li2024metamathematicsresolutionlowerbounds} showed that this class captures the problem of proving certain restricted lower bounds for Resolution proofs.
\end{itemize}

We do note that most separation results were proved in the decision-tree model, and our definition might not directly apply there. But we are hopeful that useful ideas could be borrowed between different models.

\subsection{Open questions}
\begin{enumerate}
    \item
    \label{item: self-low-question}
    Are \cc{PPAD}, \cc{PPADS}, \cc{CLS}, \cc{UEOPL}, and $\cc{PPA}_k$ (for $k > 3$) self-low?
    
    \item The newly defined complexity class \cc{PLC} (polynomial long choice) is meant to capture the combinatorial principle of the iterated pigeonhole principle \cite{pasarkar2022extremal}. \cite{pasarkar2022extremal} ask if $\cc{PLC} \subseteq \cc{FP^{PPP}}$? We believe one should also ask more general questions like is $\cc{PLC} \subseteq \cc{PPP^{PPP}}$ or $\cc{PPP^{PPP}} \subseteq \cc{PLC}$ or if the two classes are incomparable.
    
\end{enumerate}

\section{Preliminaries}
\subsection{\cc{TFNP}}
We begin by formally defining a search problem and \cc{TFNP} search problems.
\begin{definition}
    A search problem is a binary relation $\mathcal{R} \subseteq \{ 0, 1\}^* \times \{ 0, 1\}^*$ where we say that $y$ is a solution to $x$ iff $(x, y) \in \mathcal{R}$.
\end{definition}

\begin{definition}[\cc{TFNP}]
    A total $\NP$ search (\cc{TFNP}) problem is a relation $\mathcal{R} \subseteq \{0, 1\}^* \times \{0, 1\}^*$ such that the following properties hold.
\begin{itemize}
    \item Polynomial: For all $(x, y) \in \mathcal{R}$, $|y| \leq \poly(|x|)$.
    \item Totality: For all inputs $x$, there is a solution $o$ such that $(x, o) \in \mathcal{R}$.
    \item \cc{FNP} membership: There exists a $\poly(|x|, |o|)$ time algorithm $V$ such that $V(x, o) = 1$ if and only if $(x, o) \in \mathcal{R}$.
\end{itemize}
\end{definition}

When dealing with \cc{TFNP} problems, we are generally concerned with many-to-one reductions. One should think of these as reductions with a single oracle call.
\begin{definition}
    \label{def: reduction}
    Let $\mathcal{R}, \mathcal{Q}$ be \cc{TFNP} problems. A many-to-one reduction from $\mathcal{R}$ to $\mathcal{Q}$ is defined as two polynomial time computable functions $f, g$ such that for all $x \in \{ 0, 1\}^*, y \in \{ 0, 1\}^*$, the following holds.
    \[ (x, g(y)) \in \mathcal{R} \impliedby (f(x), y) \in \mathcal{Q} \]
\end{definition}

Alternatively, there is the notion of a Turing reduction, where one can make multiple oracle calls. Informally, we say that a class is Turing-closed if Turing reductions give us no more power than many-to-one reductions.
\begin{definition}[\cite{fleming2024black}]
    We say that a search problem $\mathcal{R}$ is Turing-closed if any problem which is polynomial time reducible to $\mathcal{R}$ via multiple calls to an oracle for a problem in $\mathcal{R}$ is also polynomial time reducible to $\mathcal{R}$ using a single call to an oracle for a problem in $\mathcal{R}$. We say a complexity class with a complete problem is Turing-closed if its complete problem is Turing closed.
\end{definition}

Quite crucially for us, a variety of important \cc{TFNP} subclasses are known to be Turing-closed.
\begin{lemma}[\cite{buss2012propositional}]
    \label{lem: turing-closed}
    \cc{PPA}, \cc{PPAD}, \cc{PPADS}, and \cc{PLS} are Turing-closed.
\end{lemma}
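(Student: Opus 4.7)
The goal is to show, for each of $\cc{PLS}$, $\cc{PPA}$, $\cc{PPAD}$, $\cc{PPADS}$, that any polynomial-time Turing reduction to the corresponding complete problem can be converted into a many-to-one reduction. Given a Turing reduction that makes queries $I_1,I_2,\dots,I_m$ (with $m$ polynomial, and each $I_{j+1}$ possibly depending adaptively on the oracle answers $o_1,\dots,o_j$), the plan is to build one single instance $\mathcal{I}^\star$ of the same complete problem such that any solution to $\mathcal{I}^\star$ encodes a full transcript $(I_1,o_1),\dots,(I_m,o_m)$ together with the final output of the reduction.

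For $\cc{PLS}$, the natural construction is a ``stacked'' local-search instance. Take the state space of $\mathcal{I}^\star$ to be tuples $(j,s,\tau)$, where $\tau=(o_1,\dots,o_{j-1})$ is a prefix of already-resolved query answers (which determines $I_j$ deterministically by simulating the reduction on $\tau$), and $s$ is a current candidate vertex inside the $\cc{PLS}$ instance $I_j$. Neighbors in $\mathcal{I}^\star$ correspond to taking one local-search step inside $I_j$, and the cost function is a lexicographic combination ``$(-j, \text{cost in } I_j(s))$'' so that any local improvement in $I_j$ decreases the inner cost while advancing $j$ strictly dominates. When $s$ is locally optimal in $I_j$ we commit it as $o_j$ and transition to $(j{+}1, s', (o_1,\dots,o_j))$ for the starting vertex $s'$ of $I_{j+1}$. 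A local minimum of $\mathcal{I}^\star$ must therefore have $j=m{+}1$, recovering the full transcript, from which a single application of the reduction produces the desired answer.

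For $\cc{PPA}$, $\cc{PPAD}$, $\cc{PPADS}$, the complete problems ask for a leaf / an additional source-or-sink / an extra sink in a graph of bounded degree specified by a circuit. The analogous construction builds $\mathcal{I}^\star$ as a ``sequential product'' whose vertices are tuples $(j, v, \tau)$ with $\tau=(o_1,\dots,o_{j-1})$ a committed prefix of solutions. Edges of $\mathcal{I}^\star$ mirror edges of the currently active graph $I_j$ (determined from $\tau$ by the reduction), except at boundary vertices, where we ``splice'' the trivial start vertex $0^n$ of $I_{j+1}$ onto each newly discovered leaf of $I_j$. This splicing is built so that (i) the designated start of $\mathcal{I}^\star$ has odd degree (respectively, is a source), and (ii) any other odd-degree vertex (respectively, source-or-sink, extra sink) of $\mathcal{I}^\star$ corresponds to a complete transcript that has found the required solution in each of $I_1,\dots,I_m$. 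The $\cc{PPAD}$ and $\cc{PPADS}$ variants only differ from $\cc{PPA}$ in using directed edges and tracking sources versus sinks; the same splicing argument carries over.

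The main obstacle is adaptivity: the $j$-th query is determined by earlier answers, so $\mathcal{I}^\star$ cannot be a naive parallel product of the $I_j$. This is resolved by placing the committed prefix $\tau$ inside the vertex label, so the circuit defining $\mathcal{I}^\star$ can, given a state, recompute which $I_j$ is active and evaluate one local transition by running the reduction on $\tau$ and then evaluating $I_j$'s circuit once; this keeps the description of $\mathcal{I}^\star$ of polynomial size even though its reachable state space may be huge. A secondary subtlety is ensuring that ``committing'' a solution $o_j$ to $\tau$ does not create spurious additional leaves/sources/sinks in the combined graph; this is handled by the standard trick of attaching a short private ``gadget path'' joining the old leaf of $I_j$ to the new starting vertex of $I_{j+1}$ so that parity (respectively, source/sink balance) is preserved at every committed boundary.
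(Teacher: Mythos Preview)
The paper does not give a proof of this lemma at all: it is stated with a citation to \cite{buss2012propositional} and no proof environment follows. So there is nothing in the paper to compare your argument against.

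That said, your sketch is broadly the right shape and matches the standard arguments in the literature. A couple of points are worth tightening if you intend this as a self-contained proof rather than a plan. For \cc{PLS}, the lexicographic cost you describe works, but you should be explicit that the transcript $\tau$ is part of the state and that the cost function can be computed in polynomial time from $(j,s,\tau)$; you should also verify that a local optimum of $\mathcal{I}^\star$ really forces $j=m{+}1$, i.e.\ that the ``commit'' transition is always a strict improvement and is available exactly when $s$ is locally optimal in $I_j$. For \cc{PPA}/\cc{PPAD}/\cc{PPADS}, the delicate part is the splicing gadget: you assert that attaching a path from each discovered leaf of $I_j$ to the start of $I_{j+1}$ preserves parity (or source/sink balance), but you have not argued that \emph{every} possible leaf of $I_j$ is handled, nor that spurious leaves cannot arise at states $(j,v,\tau)$ where $\tau$ is an \emph{inconsistent} prefix (one not actually generated by the reduction). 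The usual fix is to make all such inconsistent states isolated or self-looped so they contribute no odd-degree vertices, and to check carefully that the only remaining odd-degree vertex besides the designated start occurs at level $m{+}1$. None of this is wrong in your plan, just underspecified.
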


\begin{lemma}[\cite{li2024distinguishing}]
    \label{lem: lossy-turing-closed}
    \cc{LOSSY} is Turing-closed.
\end{lemma}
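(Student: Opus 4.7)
The plan is to show that any Turing reduction to the \cc{LOSSY}-complete problem can be simulated by a single many-to-one reduction, following the Buss--Johnson amalgamation template: combine all oracle calls into one \cc{LOSSY} instance whose solution can be decoded into a witness for one of the original calls, together with a consistent transcript that lets the reduction complete. Recall that a \cc{LOSSY} instance is a pair of circuits $(C,D)$ with $C:\{0,1\}^n\to\{0,1\}^{n-1}$ and $D:\{0,1\}^{n-1}\to\{0,1\}^n$, and a solution is any $x\in\{0,1\}^n$ with $D(C(x))\neq x$, which must exist because $C$ cannot be injective on $\{0,1\}^n$.

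I would first dispatch the non-adaptive case as a warm-up. Given instances $(C_1,D_1),\ldots,(C_k,D_k)$ all on $n$ bits after padding, define a combined instance $(C^*,D^*)$ on $\lceil\log k\rceil+n$ bits by $C^*(i,x)=(i,C_i(x))$ and $D^*(i,y)=(i,D_i(y))$. The strict pigeonhole gap is preserved, since the codomain has size $k\cdot 2^{n-1}<k\cdot 2^n$, and any solution $(i^*,x^*)$ projects to a solution $x^*$ for $(C_{i^*},D_{i^*})$. This already reduces $k$ non-adaptive oracle calls to a single call.

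For the adaptive case, I plan an inductive construction along the computation path of the Turing reduction $R$ of depth $T=\poly(n)$. At level $t$, the query $(C_t,D_t)$ is determined by the witnesses received at levels $1,\ldots,t-1$. I would build $(C^*,D^*)$ whose domain indexes tuples $(t, x_1,\ldots,x_{t-1}, x)$ representing a candidate transcript of \cc{LOSSY} witnesses through the first $t-1$ queries together with an input $x$ to the $t$-th query, where $(C_t,D_t)$ is computed on the fly from the transcript prefix. The map $C^*$ acts as $(t,x_1,\ldots,x_{t-1},x)\mapsto(t,x_1,\ldots,x_{t-1},C_t(x))$, with $D^*$ reversing the last coordinate via $D_t$. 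A \cc{LOSSY} solution to $(C^*,D^*)$ then either certifies a genuine witness at the current level or, via a consistency check folded into the construction, exposes a witness at an earlier level.

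The main obstacle is keeping the combined domain polynomially sized while preserving the strict pigeonhole gap at every level, since the space of all potential transcripts is a priori exponential in $T$. I would address this by restricting the domain to transcripts actually reachable under $R$'s strategy and by encoding any inconsistency in a transcript as an immediate \cc{LOSSY} witness in $(C^*,D^*)$ through suitable padding, so that the combined instance remains polynomial in size while every solution decodes to a witness for some query $R$ actually makes. The many-to-one reduction then uses this witness together with the certified prefix to simulate $R$ to completion and output its final answer; correctness follows by induction on $T$, with the non-adaptive case as the base.
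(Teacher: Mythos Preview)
The paper does not itself prove this lemma; it is cited from \cite{li2024distinguishing}. However, the paper's proof of \cref{lem: lossy_self_low} contains exactly the mechanism needed, and your proposal is missing it.

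Your non-adaptive warm-up does not do what you claim. A solution $(i^*, x^*)$ to your combined instance yields a \prob{Lossy} witness for \emph{one} of the $k$ sub-instances. But a Turing reduction making $k$ non-adaptive calls needs valid answers to \emph{all} $k$ before it can produce its output; one answer does not let you simulate it to completion. The same defect carries into the adaptive construction: a solution $(t, x_1, \ldots, x_{t-1}, x)$ certifies $x$ as a witness for the level-$t$ query determined by the prefix, but nothing forces $x_1, \ldots, x_{t-1}$ to be valid \prob{Lossy} answers, and nothing supplies witnesses for levels $t+1, \ldots, T$. Your appeals to ``restricting to reachable transcripts'' and ``encoding inconsistency as a \cc{LOSSY} witness'' point in the wrong direction: you want tuples with an invalid entry to be \emph{non}-solutions of the combined instance, so that every combined solution encodes a full valid transcript. (The worry about the domain being ``exponential'' is also a red herring: a $T$-tuple of $n$-bit strings is $Tn$ bits, which is polynomial in the input.)

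The missing observation is that a non-witness is losslessly compressible: if $w_j$ is \emph{not} a solution for $(C_j, D_j)$ then by definition $D_j(C_j(w_j)) = w_j$, so $w_j$ is recoverable from $C_j(w_j)$. Take the domain to be tuples $(w_1, \ldots, w_T)$; have $C'$ simulate $R$, feeding $w_j$ as the answer to query $j$, and at the first $j$ where $w_j$ fails to be a valid answer output $(j, w_1, \ldots, w_{j-1}, C_j(w_j), w_{j+1}, \ldots, w_T)$. The matching $D'$ re-simulates using $w_1,\ldots,w_{j-1}$ to recover $(C_j, D_j)$ and then decompresses via $D_j$, so $D'\circ C'$ is the identity on every tuple containing an invalid entry. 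Hence every \prob{Lossy} solution of $(C', D')$ is a complete valid transcript for $R$, from which $R$'s output follows in one shot. This is precisely the $C_*$ (``evaluation by first witnesses'') device the paper introduces and deploys in \cref{lem: lossy_self_low}; the $\lceil \log T\rceil$ bits of overhead for the index $j$ are absorbed by the $n/2$-bit compression of each $C_j$, just as there.
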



\subsection{Some \cc{TFNP} subclasses}
\label{subsec: problems}
Here we review some \cc{TFNP} subclasses and give some informal intuition regarding their structures. All of these classes of interest will involve a polynomial size circuit implicitly encoding an exponential size object. The goal will be to find (an efficiently verifiable) structure which must exist in this exponential size object. We note that we will freely switch between a binary string or set ($\{0, 1\}^n$) and its integer representation ($[2^n]$). We encourage the reader to not be overly concerned with this technical detail.

\begin{definition}[\cc{PPA} and \prob{Bipartite-mod-2}]
    The problem \prob{Bipartite-mod-2} is defined as follows. Given a circuit $C: \{0 ,1\} \times \{ 0, 1\}^n \rightarrow \mathsf{Set}_{\leq 2}(\{ 0, 1\} \times \{ 0, 1\}^n)$ (where $\mathsf{Set}_{\leq 2}(S)$ denotes some encoding of subsets of $S$ with size at most $2$), representing a bipartite graph on the vertex set ($\{ 0\} \times \{ 0, 1\}^n, \{ 1\} \times \{ 0, 1\}^n$) with $|C(0 0^n)| = 1$ find either of the following.
    \begin{enumerate}
        \item $x \neq 0 0^n$ such that $|C(x)| = 1$
        \item $x,y$ such that $y \in C(x)$ but $x \notin C(y)$
    \end{enumerate}    
    $\cc{PPA}$ is defined as all \cc{TFNP} problems which are many-to-one reducible to the problem \prob{Bipartite-mod-2}.
\end{definition}
The circuit for \prob{Bipartite-mod-2} should be viewed as implicitly encoding a bipartite graph on vertices $\{ 0, 1\} \times \{ 0, 1\}^n$. We think of all vertices in $0 \times \{ 0, 1\}^n$ as being on the left of this graph and all vertices in $1 \times \{ 0, 1\}^n$ as being on the right of this graph. $C(x)$ outputs a set of size at most 2 which is the set of vertices connected to $x$. We can ensure syntactically that edges on the vertices on the left are only connected to vertices on the right and vice versa by modifying the circuit $C$. We elide this minor technical detail and assume that the circuit $C$ satisfies this property. A solution to \prob{Bipartite-mod-2} is a vertex which does not have exactly 1 neighbor (a type 1 solution). Alternatively it is a witness that the circuit does not encode a graph since $y \in C(x)$ implies $(x, y)$ is an edge in the implicitly defined graph, which should imply that $(y, x)$ is also an edge in that graph and therefore that $x \in C(y)$ (a type 2 solution). To see that \prob{Bipartite-mod-2} is total assume that $C$ encodes a bipartite graph (otherwise, a type 2 solution to \prob{Bipartite-mod-2} exists), consider the sum of the degrees of all vertices $0 \times \{ 0, 1\}^n$, call it $a$. Similarly, call the sum of the degrees of all nodes $1 \times \{ 0, 1\}^n$ $b$. Notice $a = b$. If all vertices except $00^n$ have degree 0 mod $2$, then $a = |C(0 0^n)| \neq 0$ (mod $2$), and $b = 0$ (mod $2$), which contradicts the fact that $a = b$. Therefore, there must be some $x \neq 0 0^n$ such that $|C(x)| = 1$.

One may justifiably ask why we have chosen \prob{Bipartite-mod-2} as the canonical complete problem for \cc{PPA}. Indeed, the most common choice of complete problem is $\prob{AnotherEndOfUndirectedLine}$, which is very similar in spirit to \cref{def: ppad}. We do this for two reasons. Most importantly, we anticipate that $\cc{PPA}_k$ and its hierarchy $\cc{PPA}_k^*$ will be future objects of study (see \cref{item: self-low-question}). The complete problem for $\cc{PPA}_k$ is a very natural generalization of \prob{Bipartite-mod-2}: \prob{Bipartite-mod-k}. We therefore feel it fitting to set the precedent by defining $\cc{PPA}$ by \prob{Bipartite-mod-2}. Second, we will in \cref{sec: def_robustness} see that the choice of complete problem for a class $\cc{A}$ does not matter much, and see in \cref{subsec: complete_prob_choice} that it does not matter at all for $\cc{PPA}$.

\begin{definition}[\cc{PPAD} and \prob{EndOfLine}]
    \label{def: ppad}
     The problem \prob{EndOfLine} is defined as follows. Given $S: \{ 0, 1\}^n \rightarrow \{ 0, 1\}^n, P: \{ 0, 1\}^n \rightarrow \{ 0, 1\}^n$ such that $S(0) \neq 0$ and $P(0) = 0$, output $x$ such that $P(S(x)) \neq x$ or $x \neq 0$ such that $S(P(x)) \neq x$. \cc{PPAD} is defined as all \cc{TFNP} problems which are many-to-one reducible to \prob{EndOfLine}.
\end{definition}
The circuit for \prob{EndOfLine} should be viewed as specifying a directed graph. The input gives us a successor circuit $S$ and a predecessor circuit $P$. We say that the graph implicitly defined by $S, P$ has an edge from $x$ to $y$ if $S(x) = y$ and $P(y) = x$. Notice that there is no edge leading to $0$ in any such graph since $P(0) = 0$. Therefore, there must be a node $x$ which has no outgoing edges, which implies $P(S(x)) \neq x$. This can be thought of as a sink of a line in the graph defined by $S, P$. Notice that \prob{EndOfLine} also allows for a solution $x$ such that $S(P(x)) \neq x$. This should be interpreted as the beginning of a new line in the graph implicitly encoded by $S, P$ (one other than the one starting at $0$), a node which has an edge out but no incoming edges.

\begin{definition}[\cc{PPADS} and \prob{SinkOfLine}]
    \label{def: ppads}
     The problem \prob{SinkOfLine} is defined as follows. Given $S: \{ 0, 1\}^n \rightarrow \{ 0, 1\}^n, P: \{ 0, 1\}^n \rightarrow \{ 0, 1\}^n$ such that $S(0) \neq 0$ and $P(0) = 0$, output $x$ such that $P(S(x)) \neq x$. \cc{PPADS} is defined as all \cc{TFNP} problems which are many-to-one reducible to \prob{SinkOfLine}.
\end{definition}
\prob{SinkOfLine} is almost the same as \prob{EndOfLine} except that we only allow one type of solution: a sink in the graph implicitly defined by $S, P$. Beginnings of a new line are no longer solutions. One can also define the following useful \cc{PPADS}-complete problem (under blackbox reductions) which we will use in \cref{sec: consequences}.
\begin{definition}
    \label{def: inj_pigeon}
    The problem \prob{Injective-Pigeon} is defined as follows. Given $C: [2^n] \rightarrow [2^n-1], D: [2^n-1] \rightarrow [2^n]$, output $x$ such that $D(C(x)) \neq x$.
\end{definition}

\begin{definition}[\cc{PLS} and \prob{Sink-of-DAG}]
    The \prob{Sink-of-DAG} problem is defined as follows. Given a $\poly(n)$ size circuits $S: [2^n] \rightarrow [2^n]$ and $V: [2^n] \rightarrow [2^n]$ such that $S(0) \neq 0$, find $v$ such that $S(v) \neq v$ and either $S(S(v)) = S(v)$ or $V(S(v)) \leq V(v)$. \cc{PLS} is the set of all \cc{TFNP} problems which are many-to-one reducible to \prob{Sink-of-DAG}.
\end{definition}
\prob{Sink-of-DAG} should be viewed as encoding a gradient ascent problem. There are two circuits, a successor circuit and a value circuit. At every point $v$ in the space, we hope that the successor function $S$ leads us to a (different) point which has a higher value, $(V(S(v)) > V(v))$. A solution to \prob{Sink-of-DAG} is a point such that this condition is violated ($S(v) \neq v$ but $V(S(v)) \leq V(v)$), or one which acts as a sink in the gradient ascent process ($S(v) \neq v$ but $S(S(v)) = S(v)$). 

\begin{definition}[\cc{PWPP} and \prob{Weak-Pigeon}]
    The \prob{Weak-Pigeon} problem is defined as follows. Given a $\poly(n)$ size circuits $C: \{ 0, 1\}^n \rightarrow \{ 0, 1\}^{n-1}$, output distinct $x_1, x_2 \in \{ 0, 1\}^n$ such that $C(x_1) = C(x_2)$. \cc{PWPP} is the set of all \cc{TFNP} problems which are many-to-one reducible to \prob{Weak-Pigeon}.
\end{definition}
\cc{PWPP} should be considered the algorithmic analogue of the weak pigeonhole principle. We know that a collision exists in $C$ since $C$ is compressing, \prob{Weak-Pigeon} asks us to find a collision.

\begin{definition}
    The problem $f(n)$-\prob{Lossy} is defined as follows. Given $C: \{ 0, 1\}^{n} \rightarrow \{ 0, 1\}^{f(n)}$ and $D: \{ 0, 1\}^n \rightarrow \{ 0, 1\}^{f(n)}$, output $x \in \{ 0, 1\}^{n}$ such that $D(C(x)) \neq x$. We refer to $(n/2)$-\prob{Lossy} simply as \prob{Lossy}. \cc{LOSSY} is defined as all \cc{TFNP} problems which are many-to-one reducible to \prob{Lossy}.
\end{definition}

\cite{korten2022derandomization} defined \prob{Lossy} (which they call \prob{Lossy Code}), but did not define the complexity class \cc{LOSSY}. We believe this is the correct and natural definition for \cc{LOSSY}. One should view the inputs to $f(n)$-\prob{Lossy} as consisting of a compressor circuit $C$ and a decompressor circuit $D$. The goal is to find a string that is not compressible by this compression scheme. Such a string must exist since the compression scheme is lossy. The following lemma shows that the compression factor of $C$ and $D$ does not matter (up to a polynomial factor)
\begin{lemma}[\cite{korten2022derandomization}]
    \label{lem: lossy-equiv}
    $f(n)$-\prob{Lossy} is many-one equivalent to \prob{Lossy} for any efficiently computable $f(n) < n$ and $f(n) = \poly(n)$.
\end{lemma}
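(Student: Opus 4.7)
My plan is to establish the equivalence by providing two reductions: an easy padding reduction for the direction toward less compression, and an iteration-based amplification for the direction toward more compression. For the padding direction, suppose we are given $(C : \{0,1\}^n \to \{0,1\}^{f(n)}, D : \{0,1\}^{f(n)} \to \{0,1\}^n)$ with $f(n) \leq n/2$; to reduce to \prob{Lossy}, I would define $C'(x) = C(x) \| 0^{n/2 - f(n)}$ and $D'(y) = D(y[1{:}f(n)])$, so that $D'(C'(x)) = D(C(x))$ and any solution $x$ to the $(n/2)$-\prob{Lossy} instance $(C', D')$ transfers directly to $(C, D)$. A symmetric construction, padding the input rather than the output, handles the subcase $f(n) > n/2$ in the opposite direction.

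The real work is in the amplification direction, converting a weakly compressing instance into one with the target ratio. Given $(C, D)$, I would construct a chain with $y_1 = C(x)$ and $y_i = C(y_{i-1} \| b_{i-1})$ for $i \geq 2$, where each $b_{i-1}$ is a fresh input block of size $n - f(n)$ so that the argument to $C$ always has size $n$. After $k$ iterations, the new input $X = (x, b_1, \ldots, b_{k-1})$ has size $n + (k-1)(n - f(n))$ and the output $y_k$ has size $f(n)$; choosing $k$ appropriately (polynomial whenever $f(n) = \poly(n)$ and $f(n) < n$) tunes the overall compression ratio to the desired target. The decompressor $D'$ inverts the chain by repeatedly applying $D$ to the current intermediate and stripping the last $n - f(n)$ bits to recover both the previous $y$-value and the corresponding fresh block.

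To extract a solution to the original instance from any $X$ with $D'(C'(X)) \neq X$: at least one iteration of the chain must fail, so there is some level $i$ with $D(C(z_i)) \neq z_i$, where $z_i$ is the $n$-bit input to the $i$-th application of $C$, and this $z_i$ is a valid solution to $(C, D)$. The main obstacle is that locating $i$ from $X$ alone requires recomputing the chain using $C$; I plan to handle this by having the extraction routine simulate the forward chain on $X$, scan through the $O(k)$ levels, and output the first $z_i$ at which the round-trip $D \circ C$ fails to be the identity. Combining the padding and amplification constructions yields the claimed many-one equivalence between $f(n)$-\prob{Lossy} and \prob{Lossy} for every efficiently computable $f(n) < n$ with $f(n) = \poly(n)$.
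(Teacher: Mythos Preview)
The paper does not prove this lemma at all; it is quoted verbatim from \cite{korten2022derandomization} and stated without proof, so there is no in-paper argument to compare against. Your sketch is the standard argument and is correct: padding trivially handles the direction toward weaker compression, and the chained amplification (iterating $C$ on the previous compressed value concatenated with fresh input bits, then unwinding with $D$) handles the direction toward stronger compression, with the $g$-part of the reduction recomputing the forward chain to locate the first level where $D\circ C$ fails. This matches the technique in Korten's paper (which phrases the amplification as a binary tree rather than a chain, but the idea and the analysis are the same), so nothing further is needed here.
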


\begin{definition}[\prob{Factor}]
    \prob{Factor} is defined as follows. Given an $n$ bit integer $x$, output $0^{n-1}$ if $x$ is prime. Otherwise, output $y \in \{ 0, 1\}^{n-1}$ such that $y$ divides $x$.
\end{definition}
Notice that an $n$ bit composite number has an $n-1$ bit non-trivial divisor. Therefore, the size of the solution to \prob{Factor} on an $n$ bit input is bounded above by $n-1$. Furthermore, as was shown in \cite{agrawal2004primes}, testing if $x$ is prime can be done in polynomial time.

\section{Definition}
\label{sec: def}
\subsection{Oracle gates}
We will be considering some \cc{TFNP} problem $\prob{A}$ given access to an oracle for a \cc{TFNP} problem \prob{B}. However, this notion will only make sense (at least as we define it) when \prob{A} is what we term a ``circuit problem''.
\begin{definition}
    \label{def: circuit_prob}
    We say \prob{A} is a circuit problem if the input to \prob{A} is a $\poly(n)$ size circuit $C: \{ 0, 1\}^{n} \rightarrow \{ 0, 1\}^{\poly(n)}$ and possibly some other $\poly(n)$ size input $a \in \{ 0, 1\}^{\poly(n)}$.
\end{definition}

We note that the complete problems for all the major \cc{TFNP} problems are in fact circuit problems (\cref{subsec: problems}). There is a minor subtlety that some problems, like \prob{Sink-of-DAG}, take as input more than one circuit $S, V$. In these cases, we treat the circuits as a single circuit $SV$, where $SV(x) = S(x) \Vert V(x)$. As an example, \prob{Sink-of-DAG} is a circuit problem since it has as input circuits $S$ and $V$ (which we treat as a single circuit). We assume a canonical gate evaluation order over circuits. Since \prob{B} is a \cc{TFNP} problem, the length of any solution on input $x$ is bounded by some polynomial $p(|x|)$. We will insist that all 
\prob{B} oracle gates $G$ in $C^{\prob{B}}$ have the form $G: \{ 0, 1\}^{z} \rightarrow \{ 0, 1\}^{p(|z|)}$. This is to ensure that a valid solution always fits within the number of output wires of $G$.

\begin{example}
    \label{ex: pwpp}
    To illustrate some of the concepts around \cc{TFNP} classes with oracles, we will review concepts by instantiating them with the $\prob{Weak-Pigeon}^{\prob{Factor}}$ instance of \cref{fig: circuit}. For the remainder of this section, we refer to the circuit defined in \cref{fig: circuit} as $T: \{ 0, 1\}^6 \rightarrow \{ 0, 1\}^5$.
\end{example}

\usetikzlibrary{shapes, arrows, calc}
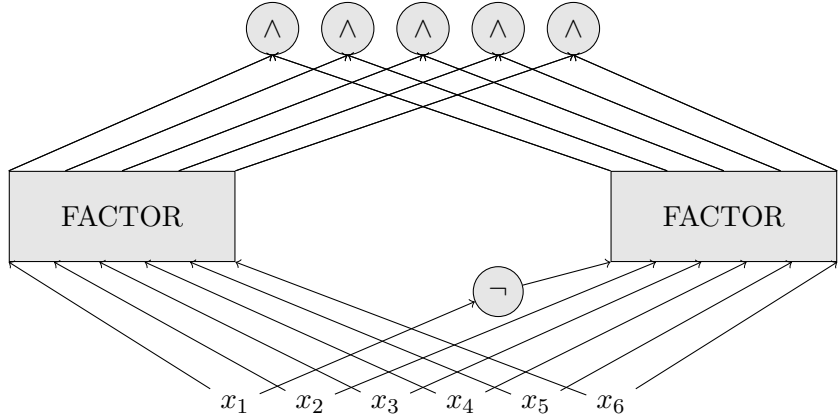
\begin{figure}
    \centering
    \begin{tikzpicture}[scale=1, every node/.style={scale=1}]
    \foreach \i in {1,2,3,4,5,6} {
        \node[] (x\i) at (\i-3.5,0) {$x_\i$};
    }

    \node[draw, fill=gray!20, minimum width=3cm, minimum height=1.2cm] (factorL) at (-4,2.5) {FACTOR};  

    \node[draw, fill=gray!20, minimum width=3cm, minimum height=1.2cm] (factorR) at (4,2.5) {FACTOR};  

    \node[circle, draw, fill=gray!20] (xor) at (1,1.5) {$\lnot$};  

    \draw [->] (x1) -- ($ (factorL.south west)!{(1-1)/5}!(factorL.south east) $);
    \foreach \i in {2,3,4,5,6} {  
        \draw [->] (x\i) -- ($ (factorL.south west)!{(\i-1)/5}!(factorL.south east) $);
        \draw [->] (x\i) -- ($ (factorR.south west)!{(\i-1)/5}!(factorR.south east) $);
    }

    \draw [->] (x1) -- (xor);
    \draw [->] (xor) -- ($ (factorR.south west)!0!(factorR.south east) $);

    \foreach \i in {1,2,3,4,5} {
        \node[circle, draw, fill=gray!20] (and\i) at (\i-3,5) {$\land$};
    }

    \foreach \i in {1,2,3,4,5} {
        \draw [->] ($ (factorL.north west)!{(\i-1)/4}!(factorL.north east) $) -- (and\i.south);
        \draw [->] ($ (factorR.north west)!{(\i-1)/4}!(factorR.north east) $) -- (and\i.south);

        \draw [->] ($ (factorL.north west)!{(\i-1)/4}!(factorL.north east) $) -- (and\i.south);
        \draw [->] ($ (factorR.north west)!{(\i-1)/4}!(factorR.north east) $) -- (and\i.south);
    }

\end{tikzpicture}

    \caption{A $\prob{Weak-Pigeon}^{\prob{Factor}}$ instance}
    \label{fig: circuit}
\end{figure}

\subsection{Evaluating oracle circuit with auxiliary oracle answers: $C^*$}
\label{subsubsec: c^*}
We now reiterate why defining $\prob{A}^{\prob{B}}$ for \cc{TFNP} problems \prob{A} and \prob{B}, is more challenging than defining $\prob{A}^{\prob{B}}$ when $\prob{B}$ is a decision problem, like \prob{SAT}. When the oracle gates in $C^{\prob{SAT}}$ are $\prob{SAT}$ gates, the behavior of the circuit on any input $x$ is well-defined since every input to an oracle gate has a unique output ($0$ or $1$). However, when the oracle gates in $C^{\prob{Pigeon}}$ are gates for a search problem (e.g., \prob{Pigeon}), the behavior of the circuit $C^\prob{Pigeon}$ on an input $x$ may not be well-defined, since the evaluation of the oracle gates could have many possible solutions. If the oracle gates of $C$ are for a problem with unique solutions (e.g., finding all prime factors of a number), this problem does not occur, but we want to work with full generality. Towards addressing this issue, we define $C^*$ which simulates $C^\prob{B}$ with auxiliary inputs that serve as solutions to its oracle gates.

\begin{definition}
    Let $C^{\prob{O}}: \{ 0, 1\}^n \rightarrow \{ 0, 1\}^m$ be an oracle circuit where $\prob{O}$ is a \cc{TFNP} problem, $C^{\prob{O}}$ contains $t$ oracle gates, and the $i^{\text{th}}$ oracle gate of $C^{\prob{O}}$ has an $s_i$-bit output. We now define
    \[ C^*: \{ 0, 1\}^n \times \prod_{i=1}^t \{ 0, 1\}^{s_i} \rightarrow \{ 0, 1\}^m \ .\]
    
    $C^*$ on input $(x, w_1, \ldots, w_t)$ simulates the evaluation of $C^{\prob{O}}$ on $x$. At the $i^{th}$ instance when $C^{\prob{O}}$ must evaluate an oracle gate, say on input $u$, we check if $w_i$ has a suffix that is a valid solution to \prob{O} on input $u$. If it is not, then $C^*(x, w_1, \dots, w_t) = \bot$ and the simulation terminates; otherwise, the simulation of $C^{\prob{O}}$ continues by using $w_i$ as the output of oracle gate $i$. If the simulation does not terminate prematurely and obtains a simulated $m$-bit output of $C^\prob{O}(x)$, $C^*(x, w_1, \dots, w_t)$ returns that output.
\end{definition}

$C^*(x, w_1, \dots, w_t)$ is the evaluation of $C^\prob{O}$ on $x$ given that the output of oracle gate $i$ is $w_i$. In other words, $(x, w_1, \dots, w_t)$ contains all the information necessary to lock in an evaluation of $C^\prob{O}$ on $x$.

We note that $w_i$ may not itself be a solution to the query made to oracle gate $i$ during the evaluation of $C^{\prob{O}}(x)$, but some suffix of $w_i$ is a valid solution. We do this because oracle gates have a fixed number of output wires but we still wish to allow for variable length solutions to queries to oracle gates. This is without loss of generality, as the circuit can determine for itself which suffix is a solution to the query it made to the oracle gate. Also for the rest of the presentation, when we say $w_i$ is a solution for some oracle query $u$, we mean that some suffix $w_i'$ where $w_i = r \Vert w_i'$ is a solution to $u$. Moreover, for some circuit $C_u$ defined on $\{0,1\}^{|w_i'|}$, we abuse the notation and define $C_u(w) := r \Vert C_u(w_i')$ for simplicity. 

\begin{example}[Continuation of \cref{ex: pwpp}]
    \label{ex: pwpp2}
    We now provide some intuition for what $T^*$ does by probing it on a few values. We assume that the gates of $T$ are evaluated from left to right. In the following examples, we freely switch between integers and their binary representations. Consider the evaluation of $T^*(10, 2, 17)$. We begin by simulating $T$ on the input integer $10$, which is $001010$ in binary. This leads us to query $001010$ at the left \prob{Factor} gate. We see that $2$ is indeed a factor of $10$, so we continue the simulation of $T$ by assuming that the \prob{Left} oracle gate outputs $2$ ($00010$ in binary). The simulation then queries $101010$ (42 in binary) at the right \prob{Factor} gate. We see that $17$ is not a factor of 42, so $T^*(10, 2, 17) = \bot$.

    Consider the evaluation of $T^*(10, 2, 21)$. Again, we begin by simulating $T$ on the input integer $10$, which is $001010$ in binary. This leads us to query $001010$ at the left \prob{Factor} gate. We see that $2$ is indeed a factor of $10$, so we continue the simulation of $T$ by assuming that the \prob{Left} oracle gate output $2$ ($00010$ in binary). The simulation next queries $101010$ (42 in binary) at the right \prob{Factor} gate. We see that $21$ is indeed a factor of $42$, so the right \prob{Factor} gate outputs $21$ ($10101$ in binary). Therefore, $T^*(10, 2, 21) = 00010 \land 10101 = 00000$.
\end{example}

\subsection{The full definition}

We are now ready to define what a solution to $\prob{A}^{\prob{B}}$ looks like.
\begin{definition}
    \label{def: A^B}
    Let \prob{A} and \prob{B} be \cc{TFNP} problems and \prob{A} be a circuit problem with a black-box verifier $V: \{ 0, 1\}^{n} \times \{ 0, 1\}^{\poly(n)} \rightarrow \{ 0, 1\}$, one which only queries the circuit input to $\prob{A}$ in a black-box manner. The input to $\prob{A}^{\prob{B}}$ is defined the same as $\prob{A}$ except the circuit input $C^{\prob{B}}$ of \prob{A} contains $t$ \prob{B} oracle gates. The verifier for $\prob{A}^{\prob{B}}$, $V': \{ 0, 1\}^{n} \times \{ 0, 1\}^{\poly(n)} \rightarrow \{ 0, 1\}$ is defined as follows. $V'$ takes as input $(C^\prob{B}, a), (y, w_{1, 1}, \dots, w_{\poly(n), t})$. $V'$ simulates the computation of $V((C^\prob{B},a), y)$ and on the $i^{th}$ evaluation of $C^\prob{B}$, say on input $x$, the black-box verifier $V'$ receives $C^*(x, w_{i, 1}, w_{i, 2}, \dots, w_{i, t})$. If at any point, $C^*(x, w_{i, 1}, w_{i, 2}, \dots, w_{i, t}) = \bot$ or $V'$ finds the sub-witnesses $w_{i, j}$ to not be \emph{internally consistent} (which we define below), $V'$ outputs $0$. Otherwise, $V'$ outputs the result of $V$ after the simulation of $V$ terminates.

    We say that the $w_{1, 1}, \dots, w_{\poly(n), t}$ are not \emph{internally consistent} (with respect to the simulation) if in the execution of $V'$, there exist two oracle queries made on the same input $u$ and the simulation uses answer $w_{a, b}$ to the first oracle gate query and $w_{a', b'}$ as the answer to the second oracle gate query and $w_{a, b} \neq w_{a', b'}$.
    
\end{definition}

\paragraph{Internal Consistency} A crucial feature in \cref{def: A^B} above is the \emph{internal consistency} requirement. Informally, it enforces that the oracle $\prob{B}$ behaves consistently on the inputs observed by $V'$. Ultimately we wish $C^\prob{B}$ to define a function (at least from the perspective of the verifier $V'$). This is because the proof of totality of many classic $\TFNP$ subclasses relies on the circuit being a function! For example, it only makes sense to instantiate the pigeonhole principle in $\prob{Pigeon}$ if the underlying circuit defines a function. While the definition of $C^*$ allows the verifier $V'$ to lock in one evaluation of $C^\prob{B}$, it does not guarantee that $C^\prob{B}$ behaves consistently across multiple evaluations done by $V'$. By requiring the oracle $\prob{B}$ to behave consistently across multiple evaluations in $V'$, we also obtain consistent behaviors of $C^\prob{B}$ as desired. One may of course consider the possibility of enforcing consistency only on $C^\prob{B}$ but not on $\prob{B}$. But we will see in \cref{sec: def_robustness} that our definition enjoys many other nice properties, indicating that it is more likely to be the ``correct'' definition.


\begin{remark}
    If we only enforce the consistency only on $C^\prob{B}$ but not on $\prob{B}$, reduction algorithm might break in the following scenario: Suppose that we are reducing $A^\prob{B}$ to $C^\prob{B}$ in a black-box manner. We construct a circuit for $C^\prob{B}$ consisting of $A^\prob{B}$-gates. Since we only enforce consistency on $C^\prob{B}$, a solution for $C^\prob{B}$ could lead to inconsistent evaluations of the $A^\prob{B}$-gates on the same input, which could cause the reduction to break. 
\end{remark}

In particular, a solution to $\prob{A}^{\prob{B}}$ is a solution to \prob{A} along with all the answers to the $\prob{B}$ oracle gate queries that are made when verifying such a solution. 
We allow any input/output of the \prob{B} gates as long as it is actually a solution to \prob{B} and the behavior is consistent across multiple evaluations of \prob{B}. In some sense, we are quantifying over all instantiations (i.e. all functions consistent with the relation defined by \prob{B}) of a \prob{B} oracle. Under \cref{def: A^B}, a reduction from some problem \prob{C} to $\prob{A}^{\prob{B}}$ should work for all instantiations of the oracle gates for \prob{B}. An alternative perspective is that we are delegating the work of fixing the \prob{B} oracle to the solution $y, w_{1, 1}, \dots, w_{\poly(n), t}$, particularly the $w_{1, 1}, \dots, w_{\poly(n), t}$.

\begin{example}[Continuation of \cref{ex: pwpp2}]
    \label{ex: pwpp3}
    Using \cref{def: A^B}, we see that a solution to $\prob{Weak-Pigeon}^{\prob{Factor}}$ on input $T$ would be distinct $x_1, x_2$ and internally consistent $w_1, w_2, w_3, w_4$ such that $T^*(x_1, w_1, w_2) \neq \bot$ and $T^*(x_1, w_1, w_2) = T^*(x_2, w_3, w_4)$. We will now try to find solutions to our $\prob{Weak-Pigeon}^{\prob{Factor}}$ instance $T$. One seemingly obvious solution is $(x_1, x_2) = (0, 32), (w_1, w_2) = (0, 2), (w_3, w_4) =  (8, 0)$. This appears to be a solution since $T^*(0, 0, 2) = 0$ and $T^*(32, 8, 0) = 0$. However, observe that this solution is not internally consistent. In particular, when evaluating $T^*(0, 0, 2)$, we assume that \prob{Factor} on input $32$ evaluates to 2, but when evaluating $T^*(32, 8, 0)$, we assume that \prob{Factor} on input $32$ evaluates to $8$.

    An actual solution for $\prob{Weak-Pigeon}^{\prob{Factor}}$ on instance $T$ (which we leave to the reader to confirm) is $(x_1, x_2) = (0, 32), (w_1, w_2) = (0, 2), (w_3, w_4) =  (2, 0)$.
\end{example}


\begin{definition}
    \label{def: A^B_classes}
    Let $\cc{A}, \cc{B} \in \cc{TFNP}$, let \prob{A} and \prob{B} be canonical complete problems for \cc{A} and \cc{B} respectively, and let \prob{A} be a circuit problem with a black-box verifier. $\cc{A}^{\cc{B}}$ is then defined as all problems which are many-to-one reducible to $\prob{A}^{\prob{B}}$.
\end{definition}

With the definition of $\cc{A}^{\cc{B}}$ in place, the definition of hierarchies follows naturally.
\hierarchy

\subsection{Another helpful evaluation: $C_*$}
We now define an alternative way to evaluate $C^{\prob{O}}$ on $x$ that is significantly different from $C^*$. While $C^*$ may be easier to reason about, $C_*$ will be useful in enforcing internal consistency of our solutions.

\begin{definition}
    Let $C^{\prob{O}}: \{ 0, 1\}^n \rightarrow \{ 0, 1\}^m$ be an oracle circuit where $\prob{O}$ is a \cc{TFNP} problem, $C^{\prob{O}}$ contains $t$ oracle gates, and the $i^{\text{th}}$ oracle gate of $C^{\prob{O}}$ has an $s_i$-bit output. We now define
    \[ C_*: \{ 0, 1\}^n \times \prod_{i=1}^t \{ 0, 1\}^{s_i} \rightarrow \{ 0, 1\}^m \ .\]
    
    We maintain an ongoing set $M$, initially empty. $C_*$ on input $(x, w_1, \ldots, w_t)$ simulate the evaluation of $C^{\prob{O}}$ on $x$. At the $i^{th}$ instance when $C^{\prob{O}}$ must evaluate an oracle gate $G_i$, say on input $u$, if $0$ is a valid solution to $G_i$ on $u$, we continue simulating $C^\prob{O}$ using $0$ as the output of oracle gate $G_i$. Otherwise, we find the minimum $j$ such that $w_j$ is a solution for \prob{O} on input $u$ and update $M \leftarrow M \cup \{ j \}$. We refer to this as the $i^{\text{th}}$ witness used to evaluate $C_*(x, w_1, \dots, w_t)$. If no such $j$ exists when evaluating oracle gate $i$, let $m$ be the smallest index such that $m \notin M$ and we output $(m, u)$. We refer to this as an error output. Otherwise, we continue simulating $C^{\prob{O}}$ by using $w_j$ as the output of oracle gate $i$. If the simulation does not terminate prematurely and obtains a simulated $m$-bit output of $C^\prob{O}(x)$, $C_*(x, w_1, \dots, w_t)$ returns that output. We refer to this as a result output. We say $w_j$ was used in the evaluation of $C_*(x, w_1, \dots, w_t)$ if $j \in M$ when the procedure for $C_*$ terminates.
\end{definition}

The key difference between $C^*$ and $C_*$ is that instead of using the subsequent $w_{i}$ as the solution to an oracle gate query in the simulation of $C^{\prob{O}}$ as we did in $C^*$, we find the first $w_i$ which is sufficient to answer the oracle gate query and use that in the simulation of $C^{\prob{O}}$. We refer to this as evaluation by first witnesses since we use the first solution possible whenever evaluating an oracle gate. We note that such evaluation naturally enforces internal consistency. 

\begin{example}[Continuation of \cref{ex: pwpp3}]
    To understand how $T_*$ is evaluated, consider $T_*(10, 2, 21)$. We begin by simulating $M$ on the input $10$, or $001010$ in binary. We query the left \prob{Factor} gate on $10$ and find that $2$ is indeed a factor of $10$ (and in particular, the first value which is a factor among $2, 21$), so we assume that the left \prob{Factor} gate outputs $00010$. We next evaluate the right \prob{Factor} gate on $42$ and find that $2$ is a factor of $42$ (and in particular, the first value which is a factor among $2, 21$), so we continue the evaluation assuming that the right \prob{Factor} gate outputs $2$. This leads the simulation of $C$ to output $00010 \land 00010 = 00010$, or $2$. Therefore $T_*(10, 2, 21) = 2$. The observant reader will note that this is a different result from \cref{subsubsec: c^*}, where $T^*(10, 2, 21) = 0$.

    Finally, let us consider an example where $T_*$ outputs a err result. Consider $T_*(14, 7, 9)$. We query the left \prob{Factor} gate on $14$ and find that $7$ is indeed a factor of $14$ (and in particular, the first value which is a factor among $7, 9$), so we assume that the left \prob{Factor} gate outputs $00111$. We also add $1$ to our set $M$. We next evaluate the right \prob{Factor} gate on $46$ and find that neither $7$ nor $9$ are factors of $46$. We therefore terminate and $T_*(14, 7, 9) = (2, 46)$, indicating that $w_2$ is the next unused witnesses. 
\end{example}

\section{Robustness of Definition}
\label{sec: def_robustness}
Having defined $\prob{A}^{\prob{B}}$ for \cc{TFNP} problems, we critique our definitions and show that they have several desirable properties. 

\subsection{Hierarchies remain in $\cc{TFNP}$}
We first show that $\prob{A}^{\prob{B}}$ and its corresponding class is in \cc{TFNP}.
\begin{theorem}
    \label{obv: A^B_in_TFNP}
    If \prob{A} and \prob{B} are \cc{TFNP} problems and \prob{A} is a circuit problem with a black-box verifier, then $\prob{A}^{\prob{B}} \in \cc{TFNP}$.
\end{theorem}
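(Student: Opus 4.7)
The plan is to verify the three defining properties of $\cc{TFNP}$ for $\prob{A}^{\prob{B}}$ in turn: polynomial solution length, polynomial-time verification (FNP membership), and totality. The first two are essentially bookkeeping; the main content is in totality.

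For polynomial length, a candidate solution has the form $(y, (w_{i,j})_{i \le \poly(n),\, j \le t})$. Since $\prob{A}$ is $\cc{TFNP}$, $|y| \le \poly(n)$, and since $\prob{B}$ is $\cc{TFNP}$ and every oracle-gate input has polynomial size, each $w_{i,j}$ (which contains a $\prob{B}$-solution as a suffix) has polynomial length; the number of $w_{i,j}$'s is polynomial, so the whole witness is polynomial in $n$. For FNP membership, the verifier $V'$ simulates $V$ and intercepts each of its polynomially many queries $x$ to $C^{\prob{B}}$ by computing $C^*(x, w_{i,1}, \ldots, w_{i,t})$; this walks the gates of $C^{\prob{B}}$ in the canonical order, invoking $\prob{B}$'s poly-time verifier on the provided $w_{i,j}$ at each oracle gate, all in polynomial time. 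In parallel, $V'$ logs every (query input, answer) pair it encounters during the $C^*$ evaluations and rejects if two entries disagree on the same input---a polynomial-time internal-consistency check.

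For totality, given an input $(C^{\prob{B}}, a)$ I build a valid solution as follows. By totality of $\prob{B}$, for every possible $\prob{B}$-instance $u$ at least one valid solution exists, so fix any function $f_B$ sending each $u$ to a specific solution $f_B(u)$. Let $g_{f_B}\colon \{0,1\}^n \to \{0,1\}^m$ be the function obtained by evaluating $C^{\prob{B}}$ with every $\prob{B}$-oracle gate replaced by $f_B$. Since $\prob{A}$ is a circuit problem with a black-box verifier, its totality is a combinatorial statement about the input/output behavior of its circuit input (as is the case for all canonical $\cc{TFNP}$ subclasses, whose totality follows from pigeonhole, parity, or gradient arguments, etc.) and therefore applies to $g_{f_B}$, yielding a polynomial-size $y$ accepted by $V$ on $(g_{f_B}, a)$. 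Finally, for each query $u_{i,j}$ that $V$ makes during the verification of $(g_{f_B}, a, y)$, set $w_{i,j} := f_B(u_{i,j})$. Each $w_{i,j}$ is a valid $\prob{B}$-solution by construction, so no $C^*$ evaluation returns $\bot$, and internal consistency is automatic because $f_B$ is a function, so identical queries receive identical answers.

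The main obstacle is justifying the application of $\prob{A}$'s totality to $g_{f_B}$, which need not itself be realized by a polynomial-size circuit. I would address this by noting that the black-box verifier $V$ interacts with its circuit input only through polynomially many input/output queries, so the existence of an $\prob{A}$-solution depends only on the underlying function at those queries and not on its succinct representation; hence $\prob{A}$'s totality principle, though stated for polynomial-size circuits, is combinatorial in nature and extends unchanged to the function $g_{f_B}$.
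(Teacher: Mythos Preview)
Your proof is correct and follows essentially the same approach as the paper's: fix the \prob{B} oracle to a specific function (the paper uses the lexicographically first solution, you allow any $f_B$), apply \prob{A}'s totality to the resulting well-defined function, and observe that the induced solution---with the oracle answers recorded as the $w_{i,j}$---is automatically internally consistent because a function gives identical answers to identical queries. Your treatment is simply more explicit about the polynomial bounds, the verification procedure, and the caveat that \prob{A}'s totality must be combinatorial (i.e., about the black-box function rather than the circuit encoding), a point the paper leaves implicit in the one-line assertion ``$\prob{A}^{\prob{B}'}$ is total since $\prob{A}$ is total.''
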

\begin{proof}
    The efficiency of the verifier for $\prob{A}^{\prob{B}}$ and the fact that $\prob{A}^{\prob{B}}$ has polynomially bounded solutions are inherited from the efficiency of the verifier for $\prob{A}$. To show totality, let us consider a problem $\prob{B}'$ defined as $\{ (x, y) : \text{$y$ is the lexicographically first solution to \prob{B} on input $x$} \}$. Notice then that $\prob{A}^{\prob{B}'}$ is total since $\prob{A}$ is total. But of course, any solution to $\prob{A}^{\prob{B}'}$ is a solution to $\prob{A}^{\prob{B}}$. Therefore, $\prob{A}^{\prob{B}}$ is total.
\end{proof}

\begin{observation}
    Let $\cc{A}, \cc{B} \in \cc{TFNP}$, let \prob{A} and \prob{B} be canonical complete problems for \cc{A} and \cc{B} respectively, and let \prob{A} be a circuit problem with a black-box verifier. $\cc{A}^{\cc{B}} \subseteq \cc{TFNP}$ and $\cc{A}^* \subseteq \cc{TFNP}$.
\end{observation}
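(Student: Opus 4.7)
The plan is to adapt the proof of \cref{obv: A^B_in_TFNP} to account for the extra $\cc{FP}^\prob{B}$-reduction wrapping $\prob{A}^\prob{B}$. Let $\mathcal{R} \in \cc{A}^\cc{B}$, witnessed by $\cc{FP}^\prob{B}$-computable $(f, g)$ giving a many-to-one reduction from $\mathcal{R}$ to $\prob{A}^\prob{B}$. I would verify the three defining conditions of $\cc{TFNP}$ in turn, leaning on the fact that $\prob{A}^\prob{B} \in \cc{TFNP}$ by \cref{obv: A^B_in_TFNP}.

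For totality and polynomial solution size, observe that for any input $x$, $f(x)$ is a valid instance of $\prob{A}^\prob{B}$. By \cref{obv: A^B_in_TFNP}, some solution $y$ of polynomial size exists, and the reduction property gives $(x, g(y)) \in \mathcal{R}$. Since $g$ runs in polynomial time (with oracle calls to $\prob{B}$ whose answers are polynomially bounded because $\prob{B} \in \cc{TFNP}$), $|g(y)| \le \poly(|y|) \le \poly(|f(x)|) \le \poly(|x|)$.

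The main content is exhibiting an $\cc{FNP}$ verifier. The key point is that each $\prob{B}$-oracle call made inside $f$ or $g$ is efficiently certifiable: the answer is polynomially bounded and checkable by the $\prob{B}$-verifier, because $\prob{B} \in \cc{TFNP}$. I would have the verifier for $\mathcal{R}$ expect a bundled witness consisting of (a) the $\prob{B}$-oracle answers used by $f$ while computing $f(x)$, (b) a full $\prob{A}^\prob{B}$-witness $y^* = (y, w_{1,1}, \dots, w_{\poly(n), t})$ for the instance $f(x)$, and (c) the $\prob{B}$-oracle answers used by $g$ while computing $g(y^*)$. The verifier then deterministically simulates $f$ on $x$ using the answers in (a), checking each answer against the $\prob{B}$-verifier; runs the $\prob{A}^\prob{B}$-verifier from \cref{obv: A^B_in_TFNP} on the resulting instance and witness $y^*$; simulates $g$ on $y^*$ using the answers in (c), again checking each with the $\prob{B}$-verifier; and finally accepts iff the result matches the claimed $\mathcal{R}$-solution.

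The main (mild) obstacle is confirming that this bundled witness has polynomial size and is checked in polynomial time. This is routine: $f$ and $g$ each make polynomially many queries, every $\prob{B}$-answer is polynomially bounded, and the $\prob{A}^\prob{B}$-witness is polynomially bounded and polynomially verifiable by \cref{obv: A^B_in_TFNP}. Hence $\mathcal{R} \in \cc{TFNP}$, yielding $\cc{A}^\cc{B} \subseteq \cc{TFNP}$.
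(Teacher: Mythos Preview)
The paper states this observation without proof, treating it as immediate from \cref{obv: A^B_in_TFNP}. Your totality and polynomial-size arguments are fine, but the FNP-membership step has a real gap.

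Your bundled-witness verifier accepts $(x,o)$ only when $o = g(y^*)$ for some valid $\prob{A}^\prob{B}$ solution $y^*$ together with some transcript of $\prob{B}$-answers. That is sound but not complete: pairs $(x,o)\in\mathcal{R}$ with $o$ outside the range of $g$ would be rejected. Moreover, the paper's FNP-membership condition asks for a deterministic polynomial-time decider $V(x,o)$ with \emph{no} auxiliary witness, so what you have actually built is a verifier for a proper sub-relation $\mathcal{R}'\subseteq\mathcal{R}$, not for $\mathcal{R}$ itself. Read literally, the observation is in fact false without restricting to FNP relations: any undecidable total relation that always contains $(x,0)$ trivially reduces to $\prob{A}^\prob{B}$ via $g\equiv 0$, yet is not in $\cc{TFNP}$.

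The intended reading---consistent with how the paper defines $\cc{PPA}$, $\cc{PLS}$, etc.\ as the \emph{TFNP} problems reducible to a complete problem, and with the introduction's phrase ``total search problems''---is that $\cc{A}^\cc{B}$ ranges only over FNP search problems, so decidability of $\mathcal{R}$ is assumed. Under that reading the only thing to check is totality, which is a one-liner from \cref{obv: A^B_in_TFNP}: any $x$ maps (under any instantiation of the $\prob{B}$ oracle) to an instance $f(x)$ with some solution $y$, and $g(y)$ is then a solution for $x$. Your additional work on FNP membership is thus either unnecessary (intended reading) or insufficient (literal reading).
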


\subsection{The choice of complete problem}
 To justify our definition, we now show that our choice for the complete problem for \cc{A} does not matter when defining $\cc{A}^{\cc{B}}$ as long as there exists a black-box reduction to/from that problem from/to the canonical complete problem for \cc{A} (up to a \cc{FP^B} reductions). We note that \cref{thm: A_to_A'} relies crucially on the fact that solutions to $\prob{A}^\prob{B}$ are internally consistent and is the reason we insist on internally consistency in \cref{def: A^B}.
 
\begin{theorem}
    \label{thm: A_to_A'}
    Let $\problem{A_1}, \problem{A_2}$ and $\problem{B}$ be \cc{TFNP} problems and $\problem{A_1}, \problem{A_2}$ are circuit problems. If $\problem{A_1}$ is black-box many-one reducible to $\problem{A_2}$, then $\problem{A_1}^\problem{B}$ is black-box many-one reducible to $\problem{A_2}^\problem{B}$ under $\cc{FP}^\problem{B}$ reduction.
\end{theorem}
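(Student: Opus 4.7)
The plan is to lift the assumed black-box reduction from $\problem{A_1}$ to $\problem{A_2}$ to a reduction between the oracle versions. Let $(f,g)$ denote the forward/backward maps of that reduction, so that given an $\problem{A_1}$-instance $(C_1,a_1)$, $f$ produces an $\problem{A_2}$-instance $(C_2,a_2)$ in which the circuit $C_2$ accesses $C_1$ only as a black-box, and $g$ maps any $\problem{A_2}$-solution on $(C_2,a_2)$ back to an $\problem{A_1}$-solution on $(C_1,a_1)$, again using only black-box access to $C_1$. For the forward map $f'$ of the new reduction, I apply $f$ with $C_1^\problem{B}$ playing the role of $C_1$: each black-box call to $C_1$ inside $C_2$ is inlined as an independent copy of the input subcircuit $C_1^\problem{B}$, producing an oracle circuit $C_2^\problem{B}$ whose $\problem{B}$-oracle gates are precisely those inherited from the inlined copies. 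The side input $a_2$ is produced by $f$ unchanged.

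For the backward map $g'$, given a solution $(y_2,\{w^2_{i,j}\})$ to $\problem{A_2}^\problem{B}$ on $(C_2^\problem{B},a_2)$, I will maintain a partial function $F$ from $\problem{B}$-query inputs to valid $\problem{B}$-solutions. Internal consistency of the $w^2_{i,j}$ lets me initialize $F$ unambiguously from these sub-witnesses. I then compute $y_1 = g(y_2)$ by simulating $g$; whenever $g$ makes a black-box query to $C_1$ on some input $x$, I evaluate $C_1^\problem{B}(x)$ locally, answering each internal $\problem{B}$-gate query $u$ with $F(u)$ if already defined, and otherwise querying the external $\problem{B}$-oracle for a solution $v$ and extending $F$ by $u\mapsto v$. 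Finally, I simulate $V_1$ on $((C_1^\problem{B},a_1),y_1)$ using the same lookup-then-extend rule for $F$; the $\problem{B}$-answers used during the $i^{\text{th}}$ evaluation of $C_1^\problem{B}$ form the $w^1_{i,*}$, and the reduction outputs $(y_1,\{w^1_{i,j}\})$.

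Correctness rests on the observation that for any consistent extension $\hat F$ of $F$, the oracle circuits $C_1^\problem{B}$ and $C_2^\problem{B}$ induce concrete functions $C_1$ and $C_2$, and by construction $C_2 = f(C_1)$, since inlining followed by a fixed $\problem{B}$-instantiation is the same as instantiating first and then applying $f$. Because $\{w^2_{i,j}\}$ is internally consistent and $V_2$ is a black-box verifier, $V_2$'s acceptance on $(C_2^\problem{B},a_2)$ is equivalent to $V_2$ accepting $y_2$ on the concrete instance $(C_2,a_2)$; hence $y_2$ is a valid $\problem{A_2}$-solution there, and the original reduction guarantees $y_1=g(y_2)$ is a valid $\problem{A_1}$-solution on $(C_1,a_1)$. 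Applying the same black-box property to $V_1$, the simulated evaluations of $C_1^\problem{B}$ recorded in $\{w^1_{i,j}\}$ cause $V_1$ to accept $y_1$. Internal consistency of $\{w^1_{i,j}\}$ is immediate since every answer was read from the single partial function $F$.

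The main obstacle, and the place where the internal-consistency clause of \cref{def: A^B} really pays its way, is ensuring that $F$ remains globally consistent across (a) $V_2$'s evaluations of $C_2^\problem{B}$, (b) $g$'s black-box queries to $C_1$, and (c) $V_1$'s evaluations of $C_1^\problem{B}$. Each phase can introduce fresh $\problem{B}$-queries not seen earlier, so I genuinely need the $\cc{FP}^\problem{B}$ power to extend $F$; the crucial point is to look up $F$ before querying the external oracle and extend only when necessary, so that all three phases implicitly commit to the same concrete circuits $C_1$ and $C_2$. The resulting reduction uses polynomially many $\problem{B}$-oracle calls, runs in polynomial time, and treats $C_1^\problem{B}$ only through inlining and simulation, giving the desired black-box $\cc{FP}^\problem{B}$ many-one reduction.
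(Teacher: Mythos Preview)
Your overall strategy---maintain a single partial function $F$ of $\problem{B}$-query/answer pairs and use it to enforce consistency across all phases---is exactly the paper's approach, and your backward map matches theirs. The gap is in your forward map. You inline $C_1^{\problem{B}}$ for each $C_1$-gate \emph{inside} $C_2$ and say ``$a_2$ is produced by $f$ unchanged,'' but a black-box $f$ may itself query $C_1$ while computing the $\problem{A_2}$-instance (e.g.\ to check a promise such as $C_1(0)=0$, or to determine $a_2$). With $C_1$ replaced by $C_1^{\problem{B}}$, those evaluations require $\problem{B}$-answers fixed at reduction time, \emph{before} the solver chooses the $w^2_{i,j}$. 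Nothing in your construction forces the solver's sub-witnesses to agree with those earlier choices, so your key commutation claim---``inlining followed by a fixed $\problem{B}$-instantiation is the same as instantiating first and then applying $f$''---can fail: under the $\hat F$ you build from $\{w^2_{i,j}\}$, the instance $(C_2^{\problem{B}}, a_2)$ you handed to the solver need not equal $f(C_1^{\hat F}, a_1)$, and then the correctness of $(f,g)$ gives you nothing.

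The paper closes precisely this gap by having the forward map also run $f$ as an $\cc{FP}^{\problem{B}}$ computation, recording every $\problem{B}$-answer it uses into a table $T$, and then \emph{hardwiring $T$ into $C_2^{\problem{B}}$}: each $\problem{B}$-gate in the output circuit first consults $T$ before behaving as an ordinary oracle gate. This forces any internally consistent solution $(\pi_2, w_2)$ to agree with $T$ on the queries $f$ already made, so $T$ can be safely extended by the $w^2_{i,j}$ without conflict, and the single-function argument you give then goes through. Adding this hardwiring step to your forward map would complete your proof.
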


\begin{proof}
    Let $(f, g)$ be a pair of polynomial-time reduction algorithms such that on an $\problem{A_1}$ instance $C_1$, $f(C_1) = C_2$ generates an $\problem{A_2}$ instance $C_2$. And given solution $\pi_2$ for $C_2$, $g(C_2, \pi_2) = \pi_1$ outputs a solution $\pi_1$ for $C_1$. In particular, we emphasize that $(f, g)$ are black-box reduction algorithms. I.e., they evaluate $C_1$ on polynomially many inputs and construct $C_2$ as a circuit with $C_1$-gates.

    We define a pair of $\cc{FP}^\problem{B}$ reduction algorithms $(f_1^\problem{B}, g_1^\problem{B})$. In particular, the reduction algorithms maintain a polynomial-sized table $T$ containing query-answer pairs to \problem{B}.

    On an $A_1^\problem{B}$ instance $C_1^{\problem{B}}$, $f_1^\problem{B}$ starts by instantiating an empty lookup table $T$. Next, it simulates $f$: whenever $f$ would evaluate $C_1^\problem{B}$ and hence query the oracle $B$ on some input $x$, it checks if $T$ contains the query-answer pair. If yes, it returns the answer stored in $T$. Otherwise, it uses its $\problem{B}$ oracle to obtain a query-answer pair, stores it in $T$ and returns the answer. At the end of its simulation of $f$, it constructs a circuit $C'$ (with $C_1^\problem{B}$-gates). It further modifies $C'$ as follows: the table $T$ is hardwired into $C'$ and any $\problem{B}$ gate in $C'$ is modified to always look for an answer from $T$ if available. The modified circuit $C_2^\problem{B}$ would be the output of $f_1^\problem{B}$.

    Given a solution $(\pi_2, w_2)$ to $C_2^\problem{B}$, $g_1^\problem{B}$ starts by simulating the verifier for $\problem{A}_2^\problem{B}$ and adding the query-answer pairs from $w_2$ to the table $T$. Note that due to the additional modification in $C_2^\problem{B}$, there will be no inconsistency between $w_2$ and $T$. Next, it simulates $g$ and similarly maintains the table $T$ in the same manner as $f_1^\problem{B}$. At the end of the simulation, it outputs $\pi_1$ and oracle answers $w_1$ for verifying the solution. 

    Internal consistency follows from the use of table $T$ and it remains to show the correctness of the reduction.

    Consider any function $b$ that is a restriction of $\problem{B}$ and is consistent with $T$. Since $b$ is a function, we may view $C_1^b$ as a vanilla circuit by hardwiring the truth table of $b$. Let $C_2^b:=f(C_1^b)$. Since $f_1^\problem{B}$ simulates $f$ and $b$ is consistent with $T$, $C_2^b$ is the same as $C_2^\problem{B}$ with the oracle gates switched and hence $\pi_2$ is a solution to $C_2^b$. Since $g_1^\problem{B}$ simulates $g$ and by correctness of $(f,g)$, $\pi_1$ is a solution to $C_1^b$. Therefore, $\pi_1$ with the associated oracle answers $w_1$ must be a solution for $C_1^\problem{B}$ since $C_1^b$ and $C_1^\problem{B}$ have the same behavior with respect to the verifier. 
\end{proof}

\begin{corollary}
    \label{cor: A_to_A'}
    Let $\problem{A_1}, \problem{A_2}$, $\problem{B}$, and $\problem{C}$ be \cc{TFNP} problems. If all the following hold, then $\problem{A}_1^{\problem{B}}$ reduces to $\prob{C}$ under many-one reductions.
    \begin{enumerate}
        \item $\problem{A_1}$ is black-box many-one reducible to $\problem{A_2}$.
        \item $\problem{A}_2^{\problem{B}}$ has a Turing reduction to \prob{C}.
        \item $\prob{B}$ has a Turing reduction to $\prob{C}$
        \item $\prob{C}$ is Turing-closed.
    \end{enumerate}
\end{corollary}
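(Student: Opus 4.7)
The plan is to chain the four hypotheses together, peeling off one ingredient at a time until all oracle calls can be funneled through $\prob{C}$, and then invoke Turing-closure to collapse the result into a many-one reduction.

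First I would apply \cref{thm: A_to_A'} to hypothesis (1): since $\prob{A}_1$ is black-box many-one reducible to $\prob{A}_2$, we obtain a many-one $\cc{FP}^{\prob{B}}$ reduction from $\prob{A}_1^{\prob{B}}$ to $\prob{A}_2^{\prob{B}}$. Call this reduction $(f_1, g_1)$, noting that both algorithms make polynomially many calls to a $\prob{B}$-oracle in the course of producing the $\prob{A}_2^{\prob{B}}$ instance and decoding its solution.

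Next I would compose the reductions. By hypothesis (2), $\prob{A}_2^{\prob{B}}$ has a Turing reduction to $\prob{C}$, so any query to an $\prob{A}_2^{\prob{B}}$ oracle in the composed procedure can be replaced by polynomially many calls to $\prob{C}$. By hypothesis (3), each query to $\prob{B}$ issued by $f_1$ or $g_1$ can likewise be replaced by polynomially many calls to $\prob{C}$. Composing these yields a single polynomial-time algorithm that solves $\prob{A}_1^{\prob{B}}$ while making polynomially many oracle calls to $\prob{C}$; in other words, a Turing reduction from $\prob{A}_1^{\prob{B}}$ to $\prob{C}$.

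Finally, I invoke hypothesis (4): since $\prob{C}$ is Turing-closed, the Turing reduction just constructed collapses to a many-one reduction from $\prob{A}_1^{\prob{B}}$ to $\prob{C}$, which is exactly the conclusion.

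The only step that requires care, and thus the main potential obstacle, is checking that the composed procedure genuinely produces a valid $\prob{A}_1^{\prob{B}}$ solution, not merely an $\prob{A}_2^{\prob{B}}$ solution re-encoded. The subtle point is bookkeeping of the auxiliary oracle answers: the $\prob{B}$-queries used inside $f_1, g_1$ must be internally consistent with those embedded in the output solution's witness list. This consistency is already guaranteed by the lookup-table mechanism in the proof of \cref{thm: A_to_A'}, so the substitution of $\prob{C}$-oracle calls for $\prob{B}$-oracle calls in the Turing reduction preserves it; once this is observed, the remaining steps are just routine composition of polynomial-time reductions.
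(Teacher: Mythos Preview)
Your proposal is correct and follows essentially the same route as the paper: invoke \cref{thm: A_to_A'} to get an $\cc{FP}^{\prob{B}}$ reduction from $\prob{A}_1^{\prob{B}}$ to $\prob{A}_2^{\prob{B}}$, replace the $\prob{B}$-oracle calls by $\prob{C}$-oracle calls via hypothesis (3), compose with the Turing reduction of hypothesis (2), and collapse via Turing-closure. Your extra paragraph about internal consistency is a reasonable sanity check but, as you note, is already absorbed into \cref{thm: A_to_A'} and need not be re-argued here.
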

\begin{proof}
    By \cref{thm: A_to_A'}, $\problem{A}_1^{\problem{B}}$ reduces to $\problem{A}_2^{\problem{B}}$ under $\cc{FP}^{\problem{B}}$ reductions. Since $\prob{B}$ has a Turing reduction to $\prob{C}$, $\problem{A}_1^{\problem{B}}$ reduces to $\problem{A_2}^{\problem{B}}$ under $\cc{FP}^{\problem{C}}$ reductions. This combined with the fact that $\problem{A}_2^{\problem{B}}$ has a Turing reduction to $\problem{C}$ implies $\problem{A}_1^{\problem{B}}$ reduces to $\problem{C}$ under $\cc{FP}^{\problem{C}}$ reductions. Since $\problem{C}$ is Turing-closed, this implies that $\problem{A}_1^{\problem{B}}$ reduces to $\problem{C}$ under many-one reductions.
\end{proof}

\begin{observation}
    \label{obv: ppp_not_selflow}
    We note that $\cc{PPP}$ is likely not self-low since even evaluating the input circuit to $\cc{PPP^{PPP}}$ is an $\cc{FP^{PPP}}$ computation, and we do not believe $\cc{FP^{PPP}} \neq \cc{PPP}$ \cite{fleming2024black}.
\end{observation}



 We now show that our choice for the complete problem for \cc{B} also does not matter as long as there exists a reduction to/from that problem from/to the canonical complete problem for \cc{B}.
\begin{theorem}
    \label{thm: B_to_B'}
    Let $\problem{B_1}, \problem{B_2}$ and $\problem{A}$ be \cc{TFNP} problems. If $\problem{B_1}$ is many-one reducible to $\problem{B_2}$, then $\problem{A}^\problem{B_1}$ is many-one reducible to $\problem{A}^\problem{B_2}$.
\end{theorem}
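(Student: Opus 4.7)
The plan is to lift a many-one reduction $(f, g)$ witnessing $\problem{B_1} \leq \problem{B_2}$ to a many-one reduction $(F, G)$ witnessing $\problem{A}^{\problem{B_1}} \leq \problem{A}^{\problem{B_2}}$. The instance map $F$ sends $(C^{\problem{B_1}}, a)$ to $(\widetilde{C}^{\problem{B_2}}, a)$, where $\widetilde{C}$ is obtained from $C$ by replacing each $\problem{B_1}$-oracle gate with a small gadget. On input $u$, the gadget first computes $f(u)$, feeds $f(u)$ into a single $\problem{B_2}$-oracle gate producing some bit-string $v$, uses the polynomial-time $\problem{B_2}$-verifier to locate the valid $\problem{B_2}$-solution suffix $v^\star$ of $v$, and finally emits $g(v^\star)$ padded to the output width prescribed for the original $\problem{B_1}$-gate in such a way that $g(v^\star)$ remains a valid suffix. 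All three ingredients are polynomial-time, so $\widetilde{C}$ has polynomial size, and by construction its $\problem{B_2}$-gates are in canonical bijection with the $\problem{B_1}$-gates of $C$.

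The solution map $G$ takes a purported solution $(y, w')$ of $\problem{A}^{\problem{B_2}}$ and returns $(y, w)$ for $\problem{A}^{\problem{B_1}}$, keeping $y$ unchanged and setting each $w_{i,j}$ to the padding of $g$ applied to the valid $\problem{B_2}$-suffix of $w'_{i,j}$. To see that $(y, w)$ is a solution, use that $\problem{A}$ has a black-box verifier $V$: by the many-one property, $g$ of any valid $\problem{B_2}$-solution on $f(u)$ is a valid $\problem{B_1}$-solution on $u$. Hence for each query $x$ issued by $V$, the bit-string produced by $C^*_C(x, w_{i, \cdot})$ on the $\problem{A}^{\problem{B_1}}$ side agrees with the bit-string produced by $C^*_{\widetilde{C}}(x, w'_{i, \cdot})$ on the $\problem{A}^{\problem{B_2}}$ side. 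Thus $V$'s two simulated traces coincide, and since $V$ accepts on the $\widetilde{C}$-side by assumption, it accepts on the $C$-side as well.

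For internal consistency of $w$, suppose two $\problem{B_1}$-queries during $V$'s simulation hit the same input $u$. Their $\problem{B_2}$-counterparts in $\widetilde{C}$ hit the same input $f(u)$, so internal consistency of $w'$ forces those two $\problem{B_2}$-sub-witnesses to be equal as bit-strings; extracting the valid suffix, applying the deterministic $g$, and padding identically therefore yields equal $\problem{B_1}$-sub-witnesses. Combined with the previous paragraph, this shows $G$ is a well-defined many-one reduction.

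There is no deep conceptual obstacle; the argument is a careful-bookkeeping exercise. The main care points are (i) aligning the suffix-and-padding conventions of $C^*$ so that the gadget's output really qualifies as a $\problem{B_1}$-answer to the original query, and (ii) matching the canonical gate-orderings of $C$ and $\widetilde{C}$ so that $w_{i,j}$ and $w'_{i,j}$ refer to corresponding oracle gates across the two worlds.
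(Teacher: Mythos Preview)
Your proposal is correct and follows essentially the same approach as the paper: replace each $\problem{B_1}$-gate by the gadget $u \mapsto g(\problem{B_2}\text{-gate}(f(u)))$, keep the $\problem{A}$-solution $y$ unchanged, and transform each sub-witness via $g$; the paper's internal-consistency argument is the contrapositive of yours (same $u$ implies same $f(u)$, hence equal $\problem{B_2}$-witnesses, hence equal images under $g$). Your treatment is slightly more careful than the paper's about suffix extraction and padding, but the underlying construction and correctness argument are identical.
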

\begin{proof}
    Let $(f, g)$ be the many-to-one reduction from \problem{B_1} to \problem{B_2}. We now show a reduction from $\problem{A}^\problem{B_1}$ to $\problem{A}^\problem{B_2}$. Let $C^{\prob{B}_1}$ be the input to $\problem{A}^\problem{B_1}$. Let $G$ be a circuit with a $\prob{B}_2$ oracle gate defined as follows. $G$ on input $x$, computes $f(x)$, feeds $f(x)$ into a \problem{B_2}-gate, and applies $g$ to the output of its \problem{B_2}-gate. 

    On an input $C_1^\problem{B_1}$, the reduction from $\problem{A}^\problem{B_1}$ to $\problem{A}^\problem{B_2}$ simply replaces each \problem{B_1} gate with $G$ and outputs the resulting circuit $C_2^\problem{B_2}$. Given $(\pi, w_1, \ldots, w_t)$ as a solution for $C_2^\problem{B_2}$, the reduction outputs $(\pi, g(w_1), \ldots, g(w_t))$ as a solution.
    
    By the correctness of the reduction $(f,g)$, $g(w_i)$ are proper solutions to the $\problem{B_1}$ oracle queries. Furthermore, by our construction of $C_2^\problem{B_2}$, any evaluation on $C_1^\problem{B_1}$ assisted by $g(w_i)$ would be exactly the same as evaluation on $C_2^\problem{B_2}$ assisted by $w_i$. As such, $(\pi, g(w_1), \ldots, g(w_t))$ is a valid solution to $C_1^\problem{B_1}$.

    It remains to verify that internal consistency is preserved. Assume towards contradiction that $g(w_i) \neq g(w_j)$ are solutions to the same \problem{B_1} query $x$ and hence $w_i \neq w_j$. However, $w_i$ and $w_j$ are solutions to the same \problem{B_1} query $f(x)$ in $C_2^\problem{B_2}$. This contradicts the internal consistency of the solution $(\pi, w_1, \ldots, w_t)$. 

\end{proof}

\subsection{The choice of complete problem (continued)}
\label{subsec: complete_prob_choice}

We have shown that for the purpose of defining the complexity class $\cc{A}^\cc{B}$, that the choice of complete problem chosen for $\cc{A}$ does not matter under $\cc{FP}^{\cc{B}}$ reductions (\cref{def: A^B_classes}, \cref{thm: A_to_A'}). \emph{However}, it may matter if we are only considering many-one reductions rather than $\cc{FP}^{\cc{B}}$ reductions. This makes the definition of $\cc{A}^{\cc{B}}$ somewhat less robust than one might like.

However, we observe that this distinction is irrelevant for self-low classes.
\begin{observation}
    Let $\prob{C}, \prob{D}$ be $\cc{TFNP}$ circuit problems that are many-one reducible to each other. Let $\prob{C}^1 = \prob{C}$ and $\prob{C}^i = \prob{C}^{(\prob{C}^{i-1})}$, similarly for $\prob{D}$. If $\prob{C}$ is Turing-closed and  ${\prob{C}}^{\prob{C}}$ has a many-one reduction to $\prob{C}$, then for all positive integers $i$, $\prob{C}^i$ is many-one equivalent to $\prob{D}^i$.
\end{observation}
Since we show that $\cc{PPA}$, $\cc{PLS}$, and $\cc{LOSSY}$ are self-low (\cref{thm: ppa_self_low}, \cref{thm: pls_self_low}, \cref{lem: lossy_self_low}), the choice of complete problem for these classes does not matter at all, at least from the perspective of hierarchies.




\section{Self-lowness}
\subsection{\cc{PPA} Self-lowness}
We show that \cc{PPA} is self-low. To do so, we make use of the following \cc{PPA}-complete problem.
\begin{definition}
    The problem \prob{Lonely} is defined as follows. The input is a circuit $C: \{ 0, 1\}^n \rightarrow \{ 0, 1\}^n$. If $C(0) \neq 0$ ($0$ is not unpaired) output anything. Otherwise, find $w \neq 0$ such that either $C(w) = w$ (a type 1 solution) or $C(C(w)) \neq w$ (a type 2 solution) and output $w$.
\end{definition}

\begin{definition}
    The problem  \prob{Lonely+} is defined as follows. Given a circuit $C: \{ 0, 1\}^n \rightarrow \{ 0, 1\}^n$, we say $a, b \in \{ 0, 1\}^n$ are matched if $C(a) = b$ and $C(b) = a$. The input is $C$ and $u \in \{ 0, 1\}^n$. If $C(u) \neq u$ ($u$ is not unpaired) output $0^n$. Otherwise, find $w \neq u$ such that either $C(w) = w$ or $C(C(w)) \neq w$ and output $w$.
\end{definition}

\begin{lemma}
    \label{lem: lonely^lonely}
    $\prob{Lonely}^{\prob{Lonely}}$ reduces to \prob{Lonely+} under black-box Turing reductions..
\end{lemma}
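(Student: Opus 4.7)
The plan is to build a polynomial-time algorithm that solves $\prob{Lonely}^{\prob{Lonely}}$ using polynomially many $\prob{Lonely+}$ oracle calls and accessing $C^{\prob{Lonely}}$ only as a black box. Throughout the run I maintain a cache $\mathcal{H}$ of pairs (inner Lonely circuit, Lonely solution); $\mathcal{H}$ both drives a consistent simulation of $C^{\prob{Lonely}}$ and supplies the auxiliary witnesses $w_{i,j}$ that the verifier for $\prob{Lonely}^{\prob{Lonely}}$ requires.

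The key subroutine is $\textsc{Resolve}(D)$: if $(D,w) \in \mathcal{H}$ return $w$; else if $D(0^n) \neq 0^n$ add $(D,0^n)$ to $\mathcal{H}$ and return $0^n$, which is valid because any output is a Lonely solution when $0$ is already paired inside $D$; otherwise query $\prob{Lonely+}$ on $(D,0^n)$, cache the returned non-trivial solution, and return it. Using $\textsc{Resolve}$ I define $\textsc{Eval}(x)$, which simulates $C^{\prob{Lonely}}$ on $x$ gate by gate and calls $\textsc{Resolve}$ on each oracle-gate input. By construction the evaluations produced by $\textsc{Eval}$ are internally consistent in the sense of \cref{def: A^B}.

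The algorithm first computes $y_0 \gets \textsc{Eval}(0^n)$. If $y_0 \neq 0^n$, then $0^n$ is paired in the consistently resolved outer instance, so any outer answer is valid and I output $(0^n,\text{answers read from }\mathcal{H})$. Otherwise $y_0 = 0^n$ and I must find a second unpaired outer vertex. For this I form the plain circuit $\hat{C}_{\mathcal{H}}$ obtained from $C^{\prob{Lonely}}$ by hardwiring $\mathcal{H}$ (defaulting uncached oracle inputs to $0^n$) and call $\prob{Lonely+}$ on $(\hat{C}_{\mathcal{H}},0^n)$ to obtain a candidate $w$. I then run $\textsc{Eval}(w)$ and $\textsc{Eval}(\hat{C}_{\mathcal{H}}(w))$, which may enlarge $\mathcal{H}$ with new queries surfaced by those evaluations. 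If $w$ is still a valid Lonely solution once $\hat{C}_{\mathcal{H}}$ is rebuilt with the enlarged cache, I output $(w,\text{cached answers})$; otherwise I iterate.

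\textbf{The main obstacle} is showing that this loop terminates after $\poly(n)$ rounds. Each non-terminating iteration strictly enlarges $\mathcal{H}$: otherwise $\textsc{Eval}(w)$ and $\textsc{Eval}(\hat{C}_{\mathcal{H}}(w))$ add no new entry, $\hat{C}_{\mathcal{H}}$ already matches the fully resolved $C^{\prob{Lonely}}$ on the $O(1)$ inputs inspected by the verifier, and $w$ is a genuine Lonely solution. To convert this strict-growth property into a polynomial bound, I plan to argue via a potential function that only $\poly(n)$ distinct inner circuits can ever be inserted into $\mathcal{H}$: each iteration performs $\textsc{Eval}$ on $O(1)$ outer inputs, and each such call introduces at most $t = \poly(n)$ fresh inner circuits, so after $k$ iterations $|\mathcal{H}| = O(tk)$. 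Combined with the observation that a successful $\prob{Lonely+}$ resolution only needs $\mathcal{H}$ to cover the queries traversed by the outer verifier (a set of size $O(t)$), this forces termination within $\poly(n)$ rounds. Internal consistency and correctness of the output are then automatic from how $\mathcal{H}$ is populated and the $C^*$-semantics of \cref{def: A^B}.
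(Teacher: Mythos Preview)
Your approach is genuinely different from the paper's. The paper does \emph{not} iterate: it enlarges the vertex set of a single \prob{Lonely+} instance to $\{0,1\}^n \times \prod_i \{0,1\}^{s_i} \times \prod_i \{0,1\}^{s_i} \times \{0,1\}$, so that a vertex carries both an outer point $x$ and all witnesses needed to verify $C^*(x,\cdot)$ and $C^*(C^*(x,\cdot),\cdot)$; bad witness tuples are paired off with their bit-flipped twin, and one \prob{Lonely+} call on this enlarged instance (plus the $t$ calls used to pin down $a_1,\dots,a_t$) finishes the job. Your scheme instead keeps the vertex set at $\{0,1\}^n$ and tries to converge on a good cache $\mathcal{H}$ by repeated \prob{Lonely+} calls.

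The termination argument is where the proposal breaks. You correctly observe that each non-terminating round strictly enlarges $\mathcal{H}$, and that $|\mathcal{H}| = O(tk)$ after $k$ rounds. But you never exhibit a \emph{polynomial upper bound} on $|\mathcal{H}|$, and none exists in general: the inner \prob{Lonely} instance queried by gate $j$ depends on the outer input $x$ and on the answers to earlier gates, so over all $x \in \{0,1\}^n$ there can be exponentially many distinct inner instances. An adversarial \prob{Lonely+} oracle can, at round $k$, return a fresh $w_k$ whose evaluation in $C^{\prob{Lonely}}$ touches inner instances not yet in $\mathcal{H}$; after you cache them the new $\hat{C}_{\mathcal{H}}$ differs on $w_k$, and the next call can return yet another $w_{k+1}$ hitting yet more uncached instances. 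The sentence ``a successful resolution only needs $\mathcal{H}$ to cover the queries traversed by the outer verifier (a set of size $O(t)$), this forces termination within $\poly(n)$ rounds'' is a non-sequitur: the $O(t)$ queries that need covering are a \emph{moving target} that changes with each new $w$, so knowing their number per round says nothing about how many rounds are required.

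This is exactly the obstacle that pushing the witnesses into the vertex labels (as the paper does) is designed to avoid: once the witnesses live in the instance itself, a single oracle call simultaneously finds the outer solution and the witnesses that certify it, and no fixed-point iteration is needed.
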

\begin{proof}
    We show the reduction from $\prob{Lonely}^{\prob{Lonely}}$ to $\prob{Lonely+}$. Let $C: \{ 0, 1\}^n \rightarrow \{ 0, 1\}^n$ be the input circuit to the reduction. We assume that $C$ has $t$ oracle gates, where the $i^{\text{th}}$ oracle gate has an output of size $s_i$.

    We now define
    \[C': \{ 0, 1\}^n \times  \prod_{i=1}^t \{ 0, 1\}^{s_i} \times \prod_{i=1}^t \{ 0, 1\}^{s_i} \times \{0,1\} \rightarrow \{ 0, 1\}^n \times  \prod_{i=1}^t \{ 0, 1\}^{s_i} \times \prod_{i=1}^t \{ 0, 1\}^{s_i} \times \{0,1\} \ .\] 

    The reduction first computes $a_1, \dots, a_t$ such that $a_1, \dots, a_t$ are consistent for $C^*(0, a_1, \dots, a_t)$ and $C^*(0, a_1, \dots, a_t) \neq \bot$ by calling its $\prob{Lonely+}$ oracle. If $C^*(0, a_1, \dots, a_t) \neq 0$, the reduction outputs $(0, a_1, \dots, a_t)$.

    We define $C'(0, a_1, \dots, a_t, 0, \dots, 0, 0) = (0, a_1, \dots, a_t, 0, \dots, 0, 0)$. $C'$ on any other input $(x_1, w_1, \ldots, w_{2t}, b)$ behaves as follows: Let $x_2 = C^*(x_1, w_1, \ldots, w_t)$ and let $x_3 = C^*(x_2, w_{t+1}, \ldots, w_{2t})$. If any of the bad events:
    \begin{itemize}
        \item $x_1 = 0$;
        \item $x_2 = \bot$;
        \item $x_3 = \bot$;
        \item Internal consistency is violated with respect to the three evaluations $C^*(0, a_1, \dots, a_t) = 0$, $C^*(x_1, w_1, \ldots, w_t) = x_2$, $C^*(x_2, w_{t+1}, \ldots, w_{2t}) = x_3$. I.e. different solutions are used for the same oracle query across the three evaluations,
    \end{itemize}
    occurs, $C'$ outputs $(x_1, w_1, \ldots, w_{2t}, b\oplus 1)$ and we call this type 1 output. Otherwise, $C'$ outputs $(x_2, w_{t+1}, \ldots, w_{2t}, w_1, \ldots, w_{t}, b)$ and we call this type 2 output. 
    
    The reduction then calls its \prob{Lonely+} oracle on $C'$, $u = (0, a_1, \dots, a_t, 0, \dots, 0, 0)$, gets back an answer $(v_1, w_1, \ldots, w_{2t}, b)$. Let $v_2 = C^*(v_1, w_1, \dots, w_t)$ and $v_3 = C^*(v_2, w_{t+1}, \dots, w_{2t})$. If $v_1 = v_2$, the reduction outputs $v_1, w_1, \dots, w_t$. If $v_2 = 0$, the reduction outputs $v_1, w_{1}, \dots, w_{t}, a_1, \dots, a_t$. If $v_1 \neq v_3$, the reduction outputs $v_1, w_1, \ldots, w_{2t}$. 

    The reduction is clearly many-to-one and runs in polynomial time. To see correctness, suppose $v = (v_1, w_1, \ldots, w_{2t}, b)$ is a solution to $C'$. We start by noting that if $v$ falls into any of the bad events, $C'(v) \neq v$ and $C'(C'(v)) = v$ by construction, and could not be a solution. In other words, we have $v_1 \neq 0$, $v_2 = C^*(v_1, w_1, \dots, w_t)$, $v_3 = C^*(v_2, w_{t+1}, \dots, w_{2t})$ and internal consistency satisfied. 
    
    We consider the following scenarios:
    \begin{enumerate}
        \item $0 \neq v_1 = v_2$. In this case, $v_1, w_1, \dots, w_t$ is a desired solution as $C^*(v_1, w_1, \dots, w_t) = v_1$. We further note that $C'(v) = v$ must fall in this case, and we only need to consider $C'(C'(v)) \neq v$ for the rest of the cases.

        \item $v_2 = 0$. We have $v_2 = C^*(v_1, w_1, \dots, w_t) = 0$ and $v_1 \neq 0 = C^*(0, a_1, \dots, a_t)$. Hence, $v_1, w_{1}, \dots, w_{t}, a_1, \dots, a_t$ is a desired solution.  

        \item $v_1 \neq v_3$. In this case, $v_1, w_1, \ldots, w_{2t}$ is a desired solution. One can verify that $C^*(C^*(v_1, w_1, \dots, w_t), w_{t+1}, \dots, w_{2t}) = v_3 \neq v_1$.

        \item Finally we show that the remaining scenario where $v_2 \neq 0$, $v_1 = v_3$ and $C'(C'(v)) \neq v$ is impossible. In particular, if $v_1 = v_3$, then the evaluation $C'(v_2, w_{t+1}, \dots, w_{2t}, w_1, \ldots, w_t) $ would not fall into bad events, since we know that $C^*(v_3, w_{1}, \dots, w_{t})$ is valid and internally consistent. As such, $C'(v_2, w_{t+1}, \dots, w_{2t}, w_1, \ldots, w_t) = (v_3, w_1, \ldots, w_{2t}) = (v_1, w_1, \ldots, w_{2t})$. This contradicts that $C'(C'(v)) \neq v$.
        
        
    \end{enumerate}
    
\end{proof}

\ppa
\begin{proof}
    $\cc{PPA}$ is trivially contained in $\cc{PPA}^*$
    
    We now show the other direction. Note that all the following conditions are satisfied.
    \begin{enumerate}
        \item \prob{Bipartite-Mod-2} has a black-box many-one reduction to \prob{Lonely}. 
        \item $\prob{Lonely}^\prob{Lonely}$ has a Turing reduction to \prob{Lonely+} by \cref{lem: lonely^lonely}.
        \item \prob{Lonely} is many-one reducible to \prob{Lonely+}.
        \item \prob{Lonely+} is Turing-closed since \cc{PPA} is Turing-closed.
    \end{enumerate}
    Therefore, \cref{cor: A_to_A'} tells us $\prob{Bipartite-Mod-2}^{\prob{Lonely}}$ has a many-one reduction to \prob{Lonely+}, which has a many-one reduction to \prob{Bipartite-Mod-2}. Finally, \cref{thm: B_to_B'} tells us $\prob{Bipartite-Mod-2}^{\prob{Bipartite-Mod-2}}$ reduces to $\prob{Bipartite-Mod-2}^{\prob{Lonely}}$. Chaining these reduction lets us conclude $\prob{Bipartite-Mod-2}^{\prob{Bipartite-Mod-2}}$ reduces to \prob{Bipartite-Mod-2}, as desired. Therefore, $\cc{PPA}^{\cc{PPA}} = \cc{PPA}$. This immediately collapses the whole $\cc{PPA}^*$ hierarchy to $\cc{PPA}$ by induction.
    
\end{proof}

\subsection{\cc{PLS} Self-lowness}
We now show that \cc{PLS} is self-low. To do so, we work with the \cc{PLS}-complete problems \prob{Iter} and \prob{Iter2}. We observe (without proof) that \prob{Iter} is \cc{PLS}-complete under black-box reductions.
\begin{definition}
    The problem \prob{Iter} is defined as follows. The input is $S: [2^n] \rightarrow [2^n]$. If $S(0) = 0$, output $0$. Otherwise, output $x$ s.t. $S(x) > x$ and $S(S(x)) \leq S(x)$.
\end{definition}

\begin{definition}
    The problem \prob{Iter2} is defined as follows. The input is $S: [2^n] \rightarrow [2^n]$. If $S(0) = 0$, output $0$. Otherwise, output any of the following solutions.
    \begin{enumerate}
        \item $x$ s.t. $S(x) < x$,
        \item $x$ s.t. $S(x) > x$ and $S(S(x)) \leq S(x)$.
    \end{enumerate}
\end{definition}

\begin{lemma}
    \prob{Iter2} is \cc{PLS}-complete.
\end{lemma}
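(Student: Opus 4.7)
The plan is to establish both containment in \cc{PLS} and \cc{PLS}-hardness of \prob{Iter2} via a single construction. Given a circuit $S:[2^n]\to [2^n]$, I define $S'(x) := \max(S(x), x)$, which is a $\poly(n)$-size circuit and satisfies $S'(0) = 0$ if and only if $S(0) = 0$. Two observations will drive the argument. First, $S'(x) \geq x$ for every $x$, so $S'$ admits no \prob{Iter2} type 1 solution. Second, whenever $S'(x) > x$ we have $S'(x) = S(x)$, and then $S'(S(x)) \leq S(x)$ if and only if $S(S(x)) \leq S(x)$; consequently, the \prob{Iter} solutions on $S'$ coincide exactly with the \prob{Iter2} type 2 solutions on $S$.

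For \cc{PLS}-hardness, it suffices to reduce the \cc{PLS}-complete problem \prob{Iter} to \prob{Iter2}. I map $S \mapsto S'$ and pass solutions through the identity: by the first observation, any \prob{Iter2} answer returned on $S'$ is either the base case $0$ (which forces $S(0) = 0$) or a type 2 solution on $S'$, and by the second observation the latter is exactly an \prob{Iter} solution on the original $S$. For \prob{Iter2} $\in$ \cc{PLS}, I use the same construction $S \mapsto S'$ in the other direction, invoking an \prob{Iter} oracle on $S'$; an \prob{Iter} solution on $S'$ translates directly to a type 2 \prob{Iter2} solution on $S$ by the second observation, and the base case is handled identically. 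Combining these two reductions with the stated \cc{PLS}-completeness of \prob{Iter} yields \cc{PLS}-completeness of \prob{Iter2}.

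There is no substantive obstacle here; the whole proof reduces to one construction plus two short sign-checks. The only step that warrants care is verifying the second observation, namely that passing from $S$ to $S' := \max(S, \mathrm{id})$ neither disturbs the strict inequality $S(x) > x$ nor flips the direction of $S(S(x)) \leq S(x)$. Both facts follow immediately from unpacking the definition of the pointwise maximum, so I expect the write-up to be only a few lines.
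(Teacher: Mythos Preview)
Your proposal is correct and essentially matches the paper: the paper uses the identical construction $S'(x)=\max(S(x),x)$ (phrased as ``$S'(x)=x$ if $S(x)<x$ and $S(x)$ otherwise'') for the hardness direction \prob{Iter} $\to$ \prob{Iter2}, with the same verification that a type~2 solution on $S'$ yields an \prob{Iter} solution on $S$. The only minor difference is that for membership the paper feeds $S$ itself---not $S'$---to the \prob{Iter} oracle, since any \prob{Iter} solution on $S$ is already a type~2 \prob{Iter2} solution on $S$; your reuse of $S'$ for that direction also works but is a bit more than necessary.
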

\begin{proof}
    We first reduce \prob{Iter} to \prob{Iter2}. Given an instance $S$ of \prob{Iter}, we let $S'(x)$ be $x$ if $S(x) < x$ and $S(x)$ otherwise. We feed $S'$ to our \prob{Iter2} to get back a solution $y$ which we output. Notice that the solution we get back must be a type 2 solution, $y$ s.t. $S'(y) > y$ and $S'(S'(y)) \leq S'(y)$. If $S'(y) > y$, then $S(y) = S'(y)$. Therefore, since $S'(S'(y)) \leq S'(y)$, $S'(S(y)) \leq S(y)$. This implies that $S(S(y)) \leq S(y)$, since by construction $S' \geq S$ for all inputs. Therefore, $y$ is a type 2 solution to \prob{Iter}.

    We now reduce \prob{Iter2} to \prob{Iter}. Given an instance $S$ of \prob{Iter2}, the reduction calls its \prob{Iter} oracle on $S$ to get back an answer $y$ which it outputs. Since $y$ is the output of the oracle call, $S(y) > y, S(S(y)) \leq S(y)$. Therefore, $y$ is a type 2 solution to \prob{Iter2}.
\end{proof}

\begin{lemma}
    \label{lem: Iter_self_low}
    $\prob{Iter}^{\prob{Iter2}}$ reduces to $\prob{Iter}$ under black-box Turing reductions.
\end{lemma}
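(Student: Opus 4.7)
The plan is to follow the template of \cref{lem: lonely^lonely}, adapted to the gradient-ascent structure of $\prob{Iter}$. In a pre-processing phase, I use Turing oracle calls to $\prob{Iter}$ (which subsumes $\prob{Iter2}$ by $\cc{PLS}$-completeness of $\prob{Iter2}$) to construct a witness list $\vec{a}$ such that $C_*(0, \vec{a})$ is a valid (non-$\bot$) evaluation of $S^{\prob{Iter2}}$ at $0$; if this evaluation equals $0$, then $(0, \vec{a})$ already constitutes a trivial $\prob{Iter}^{\prob{Iter2}}$ solution and we output it and halt.

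Otherwise, I construct a vanilla $\prob{Iter}$ instance $S' : [2^N] \to [2^N]$ on the state space $V = [2^n] \times \{0,1\}^W$ (with $W$ a polynomial chosen large enough to hold the witnesses for two consecutive evaluations of $S^{\prob{Iter2}}$), ordered lexicographically with the $x$-component most significant. A state $v = (x, w)$ encodes a candidate witness list $w$ that is supposed to cover both evaluations of $S^{\prob{Iter2}}$ used to verify $x$ as an $\prob{Iter}$-solution. I would define $S'(v) = v$ iff $v$ is a \emph{good} state---meaning $y_1 := C_*(x, w)$ is a valid result with $y_1 > x$, and $y_2 := C_*(y_1, w)$ is a valid result with $y_2 \leq y_1$---and $S'(v) = v + 1$ (the lex successor) otherwise, with the maximum state of $V$ self-looping. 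A special bootstrap $S'((0, 0^W)) = (0, \vec{a})$ ensures the trajectory starting at the initial state of $V$ is nontrivial. The crucial leverage is that $C_*$'s first-witness semantics, when the \emph{same} list $w$ is used for both evaluations, automatically enforces the internal consistency mandated by \cref{def: A^B}.

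Applying the $\prob{Iter}$ oracle to $S'$ yields a solution $v^*$ with $v^* + 1$ a sink of $S'$. If $v^* + 1 = (x, w)$ is a good state, I convert it to a $\prob{Iter}^{\prob{Iter2}}$ solution: $x$ is the underlying $\prob{Iter}$-solution, and the subset of $w$ that $C_*$ actually consulted during the two evaluations is repackaged into the per-gate witnesses $(w_{i, j})$ required by \cref{def: A^B}.

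The main obstacle I anticipate is handling ``bad'' sinks---states where $C_*$ errors because $w$ lacks a usable witness for some $\prob{Iter2}$ query, as well as the self-loop at the maximum state of $V$---since the $\prob{Iter}$ oracle may adversarially return the pre-sink of such a state instead of a good one. My plan is to exploit the Turing structure of the reduction: a bad sink pinpoints a specific $\prob{Iter2}$ query whose answer is missing, which an additional oracle call resolves; we then augment $\vec{a}$ with the new witness and re-run $\prob{Iter}$ on a refined $S'$. The delicate portion of the argument is designing the iteration and state encoding so that this procedure terminates in polynomially many rounds, which I expect to follow from a potential argument on the augmented witness list together with a careful choice of $W$ and of the bootstrap so that every adversarial ``bad'' output strictly increases this potential.
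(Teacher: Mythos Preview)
Your proposal has a genuine gap in the ``bad sink'' handling that your potential argument cannot repair. In your construction the maximum state $(2^n-1,1^W)$ is \emph{always} a sink: with $x=2^n-1$ one has $y_1\in[2^n]$ and hence $y_1>x$ is impossible, so no state with first coordinate $2^n-1$ is ever good. Consequently the predecessor of the maximum is always a valid $\prob{Iter}$ answer, and an adversarial oracle may return it in every round. After at most $2t$ rounds you will have filled your table with answers to all $\prob{Iter2}$ queries arising along the evaluation of $S$ and $S^2$ at $x=2^n-1$; from that point on the pre-max state evaluates without error via $C_*$, is still not good (since $y_1\le x$), and exposes no missing query for you to resolve. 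Your loop stalls. More generally, the lexicographic increment $v\mapsto v+1$ carries no information about the outer $\prob{Iter}$ structure, so nothing forces the oracle toward a state whose $x$-coordinate is an actual local maximum of $S$.

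The paper avoids this by never delegating the inner $\prob{Iter2}$ queries to separate Turing calls. Instead it builds $S'$ so that a single $\prob{Iter}$ call does both jobs at once: when the two-step evaluation succeeds, $S'$ advances $x\mapsto y=S^*(x,\cdot)$ (so the outer iteration follows $S$ rather than the lexicographic order); when the second evaluation fails at the $m$-th oracle gate, whose query encodes an $\prob{Iter2}$ instance $H$, $S'$ keeps $x$ fixed and replaces $w_m$ by $H(w_m)$ (zeroing later coordinates). Thus the inner $\prob{Iter2}$ search is \emph{embedded} in the outer $\prob{Iter}$ instance: repeatedly applying $S'$ in this ``case 3'' walks $w_m$ up along $H$ until it becomes an $\prob{Iter2}$ solution, at which point the two-step evaluation succeeds and $x$ advances. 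A short case analysis then shows the only way $S'(v)>v$ and $S'(S'(v))\le S'(v)$ can hold is that the first step advances $x$ and the second lands at $z\le y$, which is exactly an $\prob{Iter}^{\prob{Iter2}}$ solution. The idea you are missing is this $w_m\mapsto H(w_m)$ update; with it, no re-runs and no potential argument are needed.
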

\begin{proof}
    We show the reduction from $\prob{Iter}^{\prob{Iter2}}$ to $\prob{Iter}$. Let $S$ be the input to the reduction. We assume that $S$ has $t$ oracle gates, where the $i^{\text{th}}$ oracle gate has an output of size $s_i$. Let $S^2$ be the circuit which simply composes $S$ with itself. Note that $S^2$ has $2t$ oracle gates.

    The reduction first computes $a_1, \dots, a_t$ such that $a_1, \dots, a_t$ are consistent for $S^*(0, a_1, \dots, a_t)$ and $S^*(0, a_1, \dots, a_t) \neq \bot$ by calling its $\prob{Iter}$ oracle. If $S^*(0, a_1, \dots, a_t) = 0$, the reduction outputs $(0, a_1, \dots, a_t)$.

    The reduction then constructs
    \[S': \{ 0, 1\}^n \times  \prod_{j=1}^t \{ 0, 1\}^{s_j} \times \prod_{j=1}^t \{ 0, 1\}^{s_j} \rightarrow \{ 0, 1\}^n \times \prod_{j=1}^t \{ 0, 1\}^{s_j} \times \prod_{j=1}^t \{ 0, 1\}^{s_j} \ .\]

    Let $y = S^*(x, w_1, \dots, w_{t})$ and if $y$ is a result output, let $z = S^*(y, w_{t+1}, \dots, w_{2t})$. We now define $S'$ whose behavior is split into four cases.
    \begin{enumerate}
        \item
        If $y = \bot$ or internal consistency is violated in $S^*(x, w_1, \dots, w_{t})$, output $(x, w_1, \dots, w_{2t})$.
        \item 
        If $y \neq \bot$, the evaluation of $S^*(x, w_1, \dots, w_{t})$ is internally consistent, and $y \leq x$, output $(x, w_1, \dots, w_{2t})$.
        \item 
        Consider when $y \neq \bot$, the evaluation of $S^*(x, w_1, \dots, w_{t})$ is internally consistent, $y > x$, and one of the following occurs: $z = \bot$ or the evaluation of ${S^{2}}^{*}(x, w_1, \dots, w_{2t})$ violates internal consistency. Let $m$ be the smallest index in $\{t+1, \ldots, 2t\}$ such that $w_m$ is not a solution to oracle gate query $m$ in the evaluation of ${S^{2}}^{*}(x, w_1, \dots, w_{2t})$ or $w_m$ violates internal consistency in the evaluation of ${S^{2}}^{*}(x, w_1, \dots, w_{2t})$ becuase gate query $m$ is the same as gate query $i$ for some $i < m$ and $w_m \neq w_i$. If $w_m$ violates internal consistency, output $(x, w_1, \dots, w_{m-1}, w_i, 0, \ldots, 0)$. Otherwise $w_m$ is not a valid solution to an oracle gate query $u$ which encodes a \prob{Iter2} query $H: \{ 0, 1\}^q \rightarrow \{ 0, 1\}^q$, output $(x, w_1, \dots, w_{m-1}, H(w_m), 0, \dots, 0)$. 
        \item 
        Say $y \neq \bot$, $y > x$, $z \neq \bot$, and the evaluation of ${S^{2}}^{*}(x, w_1, \dots, w_{2t})$ is internally consistent. Output $(y, w_{t+1}, \dots, w_{2t}, 0, \dots, 0)$.
    \end{enumerate}

    The reduction calls its \prob{Iter} oracle on $S'$ to get back an answer $(x, w_1, \dots, w_{2t})$ which would be the output of our reduction.

    The reduction clearly runs in polynomial time since $S'$ translates to a polynomial size circuit and all other operations run in polynomial time. We now show the correctness. Let $v = (x, w_1, \dots, w_{2t})$ be the answer the oracle returned. The following two equations must hold.
    \begin{equation}
        \label{eq: cond1}
        S'(v) > v
    \end{equation}
    \begin{equation}
        \label{eq: cond2}
        S'(S'(v)) \leq S'(v)
    \end{equation}

    We will divide our proof of correctness by which cases are used to evaluate $S'(v)$ and $S'(S'(v))$. Before we dive into the case analysis, we note that $S'$ evaluated as case 1 and case 2 is the identity and could not satisfy \cref{eq: cond1}. Hence $S'(v)$ has to be evaluated using either case 3 or case 4. $S'$ evaluated as case 4 is strictly increasing. Hence $S'(S'(v))$ could not be evaluated as case 4.
    
    \begin{enumerate}
        \item $S'(v)$ is evaluated using case 3 and $S'(S'(v))$ is evaluated using case 1 or case 2. This case cannot happen. Since $S'(x, w_1, \dots, w_{2t})$ was evaluated using case 3, $S^*(x, w_1, \dots, w_t) \neq \bot$ and its evaluation is internally consistent. Let $S'(x, w_1, \dots, w_{2t}) = (x, w_1', \dots, w_{2t}')$. Notice that $w_i = w_i'$ for all $i$ in $[1, t]$. Therefore, $S^*(x, w_1', \dots, w_t') \neq \bot$ and its evaluation is internally consistent. Therefore, $S'(x, w_1', \dots, w_{2t}')$ will not be evaluated using case 1 or case 2.  

        \item $S'(v)$ is evaluated using case 3 and $S'(S'(v))$ is evaluated using case 3. This case cannot happen.
        Let $m$ be the identified index in $S'(x, w_1, \dots, w_{2t})$ and $m'$ be the identified index in $S'(S'(x, w_1, \dots, w_{2t}))$. Note that $m \leq m'$. Consider first when $m < m'$. Then $S'(x, w_1, \dots, w_{2t}) = (x, w_1, \dots, w_{m-1}, \hat{w}_m, 0, \dots, 0)$ and $S'(x, w_1, \dots, w_{m-1}, \hat{w}_m, 0, \dots, 0) = (x, w_1, \dots, w_{m-1}, \hat{w}_m, 0, \dots, 0, \hat{w}_{m'}, 0, \dots, 0)$ for some $\hat{w}_m$ and $\hat{w}_{m'} \neq 0$. Therefore, $S'(S'(x, w_1, \dots, w_{2t})) > S'(x, w_1, \dots, w_{2t})$, violating \cref{eq: cond2}. If $m = m'$, this can only happen when $w_m$ and $H(w_m)$ are not solutions to $u$. 
        $S'(x, w_1, \dots, w_{2t}) = (x, w_1, \dots, w_{m-1}, H(w_m), 0, \dots, 0)$ and $S'(S'(x, w_1, \dots, w_{2t})) = (x, w_1, \dots, w_{m-1}, H(H(w_m)), 0, \dots, 0)$. But since $w_m'$ was not a solution to \prob{Iter2} on instance $H$, either $H(w_m') = w_m'$ or $H(H(w_m')) > H(w_m')$. But we know from the \cref{eq: cond1} that $H(w_m') > w_m'$, therefore $H(H(w_m')) > H(w_m')$. Which implies $S'(S'(x, w_1, \dots, w_{2t})) > S'(x, w_1, \dots, w_{2t})$, contradicting \cref{eq: cond2}.


        \item $S'(v)$ is evaluated using case 4 and $S'(S'(v))$ is evaluated using case 1. This case cannot happen. In particular, since $S'(x, w_1, \dots, w_{2t})$ is evaluated using case 4, we know $S^*(y, w_{t+1}, \dots, w_{2t}) \neq \bot$ and is internally consistent.

        \item $S'(v)$ is evaluated using case 4 and $S'(S'(v))$ is evaluated using case 2. Notice that $S'(x, w_1, \dots, w_{2t}) = (y, w_{t+1}, \dots, w_{2t}, 0, \dots, 0)$ where $S^*(y, w_{t+1}, \dots, w_{2t}) = z$. Therefore, $S'(S'(x, w_1, \dots, w_{2t}))$ being evaluated using case 2 means that $z \leq y$. The reduction therefore outputs a valid solution in this case. 

        \item $S'(v)$ is evaluated using case 4 and $S'(S'(v))$ is evaluated using case 3. Let $S'(x, w_1, \dots, w_{2t}) = (y, w_{t+1}, \dots, w_{2t}, 0, \dots, 0)$. 
        Evaluation by case 3 implies that $S'(y, w_{t+1}, \dots, w_{2t}, 0, \dots, 0) = (y, w_{t+1}, \dots, w_{2t}, 0, \dots, 0, \hat{w}_m, 0, \dots)$. As such, $S'(S'(x, w_1, \dots, w_{2t})) > S'(x, w_1, \dots, w_{2t})$, contradicting \cref{eq: cond2}.

    \end{enumerate}
\end{proof}

\pls
\begin{proof}
    $\cc{PLS}$ is trivially contained in $\cc{PLS}^*$

    We now show the other direction. Note that all the following conditions are satisfied.
    \begin{enumerate}
        \item \prob{Sink-of-DAG} has a black-box many-one reduction to \prob{Iter}. 
        \item $\prob{Iter}^\prob{Iter2}$ has a Turing reduction to \prob{Iter} by \cref{lem: Iter_self_low}.
        \item \prob{Iter2} is many-one reducible to \prob{Iter}.
        \item \prob{Iter} is Turing-closed since \cc{PLS} is Turing-closed.
    \end{enumerate}
    Therefore, \cref{cor: A_to_A'} tells us $\prob{Sink-of-DAG}^{\prob{Iter2}}$ has a many-one reduction to \prob{Iter}, which has a many-one reduction to \prob{Sink-of-DAG}. Finally, by \cref{thm: B_to_B'}, $\prob{Sink-of-DAG}^{\prob{Sink-of-DAG}}$ has a many-one reduction to $\prob{Sink-of-DAG}^{\prob{Iter2}}$. Chaining these reductions gives us a reduction from $\prob{Sink-of-DAG}^{\prob{Sink-of-DAG}}$ to \prob{Sink-of-DAG}. Therefore, $\cc{PLS}^{\cc{PLS}} = \cc{PLS}$. This immediately collapses the whole $\cc{PLS}^*$ hierarchy to $\cc{PLS}$ by induction.

    
\end{proof}

\subsection{\cc{LOSSY} Self-lowness}
We now show that \cc{LOSSY} is self-low.

\begin{lemma}
    \label{lem: lossy_self_low}
    $\prob{Lossy}^{\prob{Lossy}}$ reduces to $(n-1)$-$\prob{Lossy}$ under many-one reductions.
\end{lemma}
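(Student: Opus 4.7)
The plan is to reduce $\prob{Lossy}^{\prob{Lossy}}$ to a flat $(N-1)$-$\prob{Lossy}$ instance by augmenting the input space with witness bits and inlining the oracle calls. Given the input $(C^{\prob{Lossy}}, D^{\prob{Lossy}})$ on domain $\{0,1\}^n$, let $T$ bound the total output size of all $\prob{Lossy}$ oracle gates used in one evaluation of $D^{\prob{Lossy}} \circ C^{\prob{Lossy}}$. I would define an oracle-free circuit $F \colon \{0,1\}^n \times \{0,1\}^T \to \{0,1\}^n$ by $F(y, \vec{w}) := (D^{\prob{Lossy}})_*((C^{\prob{Lossy}})_*(y, \vec{w}), \vec{w})$, using the first-witness operators to inline the oracle calls. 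Because $C_*$ always picks the lexicographically first valid witness in $\vec{w}$ for each query, the evaluation of $F$ is automatically internally consistent; when $C_*$ or $D_*$ errs (no valid witness available in $\vec{w}$), set $F(y, \vec{w}) := y$ by convention, so that such pairs are never valid solutions downstream.

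Next, I would set $N := n + T$ and construct circuits $C' \colon \{0,1\}^N \to \{0,1\}^{N-1}$ and $D' \colon \{0,1\}^{N-1} \to \{0,1\}^N$ so that solutions to the resulting $(N-1)$-$\prob{Lossy}$ instance $(C', D')$ are exactly pairs $(y, \vec{w})$ with $F(y, \vec{w}) \neq y$. A natural design is $C'(y, \vec{w}) := F(y, \vec{w}) \Vert \vec{w}_{1:T-1}$, which drops a single bit of $\vec{w}$. I would then define $D'(\hat{y}, \hat{\vec{w}})$ to inspect both completions $\hat{\vec{w}} \Vert 0$ and $\hat{\vec{w}} \Vert 1$, evaluate $F(\hat{y}, \cdot)$ on each, and return the completion $b$ for which $F(\hat{y}, \hat{\vec{w}} \Vert b) = \hat{y}$ (defaulting to $b = 0$ otherwise), so that every ``bit-drop-only'' pair is absorbed into a fixed point of $D' \circ C'$. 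Given a solution $(y, \vec{w})$ to $(C', D')$, the $\prob{Lossy}^{\prob{Lossy}}$ solution is read off by recording the witness answers used by $(C^{\prob{Lossy}})_*$ and $(D^{\prob{Lossy}})_*$ during the evaluation of $F(y, \vec{w})$; internal consistency of these answers is automatic from the first-witness semantics.

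The main obstacle is ruling out spurious $(N-1)$-$\prob{Lossy}$ solutions, namely pairs $(y, \vec{w})$ with $F(y, \vec{w}) = y$ but $D'(C'(y, \vec{w})) \neq (y, \vec{w})$ purely due to the bit mismatch. A case split on how many completions of the dropped bit fix $F(y, \cdot)$ dispatches most configurations immediately: the only troublesome case is when both completions simultaneously fix $F$ but a downstream bit disagreement remains. I expect to resolve this either by a more careful choice of which bit to drop (e.g., picking a bit guaranteed to influence the first-witness computation of $F$), or by first producing a more aggressively compressing instance, such as an $(N/2)$-$\prob{Lossy}$ instance in which $D'$ can be chosen injective onto the set of canonical fixed points of $F$ so that all spurious pairs lie outside its image, and then invoking the $f(n)$-$\prob{Lossy}$ equivalence of \cref{lem: lossy-equiv} to land in $(n-1)$-$\prob{Lossy}$ as required.
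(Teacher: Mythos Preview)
Your proposal has a genuine gap: it never uses the fact that the inner oracle gates are \emph{themselves} \prob{Lossy} instances, and without that the construction cannot be completed. You correctly identify the obstacle---pairs $(y,\vec{w})$ with $F(y,\vec{w})=y$ that nonetheless survive as spurious $(N-1)$-\prob{Lossy} solutions because $D'$ recovers the wrong last bit---but neither of your suggested fixes works. A ``more careful choice of which bit to drop'' cannot be made globally: which witness bits influence $F$ depends on $(y,\vec{w})$, and the dropped bit must be fixed in advance. The alternative of ``producing an $(N/2)$-\prob{Lossy} instance'' is not available either, since your $F$ compresses nothing ($y\mapsto F(y,\vec{w})$ is $n$ bits to $n$ bits) and the witness block is kept verbatim; there is no extra compression to harvest. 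Indeed, your argument nowhere distinguishes a \prob{Lossy} oracle from an arbitrary $\cc{TFNP}$ oracle, which should already be a warning sign.

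The paper's proof uses exactly the structure you discard. The compressor $C'$ branches on whether $(D\circ C)_*(x,w_1,\dots,w_t)$ succeeds or errors. On a result output it stores $x_2=C_*(x,\vec{w})\in\{0,1\}^{n/2}$ (using the \emph{outer} compression), not $F(x,\vec{w})\in\{0,1\}^n$. On an error at index $m$ with inner instance $(c,d)$, it stores $(m,x,w_1,\dots,w_{m-1},c(w_m),w_{m+1},\dots,w_t)$, using the \emph{inner} compression on the offending witness. The key observation is that ``$w_m$ is not a solution to the inner \prob{Lossy}'' means precisely $d(c(w_m))=w_m$, so $D'$ can perfectly invert in the error branch; hence error configurations are \emph{never} solutions to the outer $(N-1)$-\prob{Lossy}, and any returned solution must come from the result branch with $x_1\neq x_3$. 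Your proposal, by setting $F(y,\vec{w}):=y$ on error and hoping $D'$ absorbs such points, replaces this structural inversion with an ad hoc one-bit guess that cannot be made sound.
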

\begin{proof}
    Let $C: \{ 0, 1\}^n \rightarrow \{ 0, 1\}^{n/2}, D: \{ 0, 1\}^{n/2} \rightarrow \{ 0, 1\}^n$ be the circuits that act as input to our $\prob{Lossy}^{\prob{Lossy}}$ problem. Say that $C, D$ collectively have $t$ \prob{Lossy} gates, where the $i^{\text{th}}$ gate has a $s_i$ bit output.
    We further assume without loss of generality that any input $c: \{ 0, 1\}^q \rightarrow \{ 0, 1\}^{q/2}, d: \{ 0, 1\}^{q/2} \rightarrow \{ 0, 1\}^q$ to a $\prob{Lossy}$ oracle gate has the form $q \geq 100 \log(t+100)$. This can be achieved by padding and applying \cref{lem: lossy-equiv}.

    
    We now construct 
    \[ C': \{ 0, 1\}^n \times \prod_i \{ 0, 1\}^{s_i} \rightarrow \{ 0, 1\}^{n + \sum s_i - 1}, \]

    \[ D':  \{ 0, 1\}^{n + \sum s_i - 1} \rightarrow \{ 0, 1\}^n \times \prod_i \{ 0, 1\}^{s_i} \ . \]

    Let $N:= n + \sum_{i = 1}^t s_i$. Let $D \circ C: \{ 0, 1\}^n \rightarrow \{ 0, 1\}^n$ be the composed circuit which consists of $C$ followed by $D$. Let $x_2 = C_*(x_1, w_1, \dots, w_t)$ and $x_3 = (D\circ C)_*(x_1, w_1, \dots, w_t)$.
    
    $C'(x_1, w_1, \dots, w_{t})$ computes $x_3 = (D\circ C)_*(x_1, w_1, \dots, w_t)$. If $x_3$ is an error output $(m, z)$ where $z$ encodes a \prob{Lossy} instance $c: \{ 0, 1\}^q \rightarrow \{ 0, 1\}^{q/2}, d: \{ 0, 1\}^{q/2} \rightarrow \{ 0, 1\}^q$, $C'$ outputs $(1, m, x_1, \dots, w_{m-1}, c(w_m), w_{m+1} \dots, w_{t}, 0, \dots, 0) \in \{ 0, 1\}^{N-1}$. Otherwise, $C'$ outputs $(0, x_2, w_1, \dots, w_{t}, 0, \dots, 0) \in \{ 0, 1\}^{N-1}$ where $x_2 = C_*(x_1, w_1, \dots, w_t)$.
    

    Next we now define $D'$. We first consider the case when input to $D'$ has the form $(0, x_2, w_1, \dots, w_{t}, 0, \dots, 0)$. $D'$ computes $x_3 = D_*(x_2, w_1, \dots, w_t)$. If $x_3$ is an err output, $D'$ outputs $0$. Otherwise, $D'$ outputs $(x_3, w_1, \dots, w_t)$.
    
    When input to $D'$ has the form $(1, m, x_1, \dots, w_{t}, 0, \dots, 0)$. $D'$ computes ${(D\circ C)_*}(x_1, w_1, \dots, w_{m-1}, 0^{s_m}, w_{m+1}, \dots,  w_{t})$. If this results in a result output, $D'$ outputs $0$. Say it results in an error output $(m', z)$ where $z$ encodes a \prob{Lossy} instance $c: \{ 0, 1\}^q \rightarrow \{ 0, 1\}^{q/2}, d: \{0, 1 \}^{q/2} \rightarrow \{ 0, 1\}^q$. 
    $D'$ outputs $(x_1, \dots, w_{m-1}, d(w_m), w_{m+1}, \dots, w_{t})$.

    The reduction feeds $C', D'$ to its \prob{Lossy} oracle to get back $(v_1, w_1, \dots, w_{t})$. The reduction outputs $v_1$ as well as witnesses used to evaluate $(D \circ C)_*(v_1, w_1, \dots, w_{t})$ in order.

    The reduction clearly runs in polynomial time. Notice also that since $c$ compresses by at least $q/2 \geq 50 \log(t+100)$ bits and $m$ requires exactly $\log_2(t)$ bits to specify, $C'$ compresses by at least 1 bit. To show correctness, let $v_2 = C_*(v_1, w_1, \dots, w_t)$ and $v_3 = (D\circ C)_*(v_1, w_1, \dots, w_t)$. We consider two cases.

    \begin{enumerate}
        \item $v_3$ is a result output. Then $C'(v_1, w_1, \dots, w_{t}) = (0, v_2, w_1, \dots, w_{t}, 0, \dots, 0)$. Notice that by construction, $D'(0, v_2, w_1, \dots, w_{t}) = (v_3, w_1, \dots, w_{t})$. By assumption, $v_1 \neq v_3$. Therefore, $v_1$ as well as the witnesses among $w_1, \dots, w_{t}$ used to evaluate $D(C(v_1))$ are a solution to our $\prob{Lossy}^{\prob{Lossy}}$ instance. The witnesses are all consistent since the first valid $w_i$ among $w_1, \dots, w_{t}$ is used to evaluate the oracle gates at every step.
        
        \item $v_3$ is an error output $(m, z)$. This cannot happen. Say $z$ encodes a \prob{Lossy} instance $c: \{ 0, 1\}^q \rightarrow \{ 0, 1\}^{q/2}, d: \{ 0, 1\}^{q/2} \rightarrow \{ 0, 1\}^q$. Then $C'(v_1, w_1, \dots, w_{t}) = (1, m, x_1, \dots, c(w_m), \dots, w_{t}, 0, \dots, 0)$. Notice that since $m \notin M$ for the evaluation of $v_3 = (D\circ C)_*(v_1, w_1, \dots, w_t)$, ${(D\circ C)_*}(v_1, w_1, \dots, w_{m-1}, 0^{s_m}, w_{m+1}, \dots, w_{t})$ should have the exact same behaviour as $(D\circ C)_*(v_1, w_1, \dots, w_t)$ and output $(m, z)$ and $w_m'$ is not a solution to \prob{Lossy} on $c, d$. This is because $C_*$ always tries $0$ as a solution to an oracle gate. As such, $D'(1, m, v_1, w_1, \dots, c(w_m), \dots, w_{t}, 0, \dots, 0)$ evaluates to $(v_1, w_1, \dots,  d(c(w_m)), \dots, w_{t}) = (v_1, w_1, \dots, w_{t})$. Therefore, $(v_1, w_1, \dots, w_m, \dots, w_{t})$ is not a solution to our oracle call to \prob{Lossy} on $C', D'$.
    \end{enumerate}
\end{proof}

\lossy
\begin{proof}
    \prob{Lossy} is trivially contained in $\cc{LOSSY}^*$.

     $\prob{Lossy}^\prob{Lossy}$ has a many-one reduction to $(n-1)$-\prob{Lossy} by \cref{lem: lossy_self_low}, which reduces to \prob{Lossy} by \cref{lem: lossy-equiv}. Chaining the reductions tells us $\prob{Lossy}^\prob{Lossy}$ reduces to \prob{Lossy}. Therefore, $\cc{LOSSY}^{\cc{LOSSY}} = \cc{LOSSY}$. This immediately collapses the whole $\cc{LOSSY}^*$ hierarchy to $\cc{LOSSY}$ by induction.
    
\end{proof}

\section{Further Applications}
\label{sec: consequences}

In this section, we demonstrate the potential of our new definitions of \cc{TFNP} subclasses with \cc{TFNP} oracles for developing better understanding of important computational problems. 

\subsection{Number theory}
Notably, our result that \cc{PPA} is self-low provides a potential way to classify the problem of deterministically generating large primes inside \cc{TFNP}. We first define the necessary problems.

\begin{definition}[\prob{Weak-Bertrand}]
    Given a string $1^n$, output a $32n$ bit prime $p$ such that $p > 2^n$.
\end{definition}
We refer to this problem as \prob{Weak-Bertrand} since Bertrand's postulate tells us that there always exists a prime between $2^n$ and $2^{n+1}$. The problem \prob{Bertrand} would ask us to find such a prime. In \prob{Weak-Bertrand}, we are asking for a prime between $2^n$ and $2^{32n}$. We now review relevant results.


\begin{lemma}[\cite{jevrabek2016integer}]
    \label{lem: factoring-in-ppa}
    Under the Generalized Riemann Hypothesis, \prob{Factor} is in $\cc{PPA}$ and \cc{PPP}.
\end{lemma}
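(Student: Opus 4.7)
The plan is to handle the two containments separately, both resting on the classical observation that any $x \not\equiv \pm y \pmod N$ with $x^2 \equiv y^2 \pmod N$ immediately yields a nontrivial factor of $N$ via $\gcd(x-y, N)$. Before starting, I would use AKS primality testing to dispose of the case that $N$ is prime, and a standard perfect-power test (together with integer-root computation) to dispose of the case that $N$ is a prime power; from here on $N$ has at least two distinct odd prime factors.

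For the \cc{PPP} containment, the plan is to actually place \prob{Factor} in \cc{PWPP} and invoke \cc{PWPP} $\subseteq$ \cc{PPP}. When $N$ has $k \geq 2$ distinct odd prime factors, the squaring map $x \mapsto x^2 \bmod N$ on $\mathbb{Z}_N^*$ is $2^k$-to-one on its image, so it compresses. I would encode this as a circuit $C\colon\{0,1\}^n \to \{0,1\}^{n-1}$ whose collisions are exactly pairs $(x,y)$ with $x^2 \equiv y^2 \pmod N$; the reduction then inspects the returned collision and outputs $\gcd(x-y,N)$, throwing away the harmless event $y \equiv \pm x$ by a pigeon-counting argument (the density of trivial collisions is too small to absorb all collisions of a map that compresses by at least one bit on $\mathbb{Z}_N^*$).

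For the \cc{PPA} containment, the plan is to use GRH to make square-root extraction modulo a prime constructive. Under GRH, Ankeny's bound produces a quadratic non-residue mod any prime $p$ in the interval $[1, O(\log^2 p)]$ deterministically, so Tonelli--Shanks becomes a polynomial-time algorithm for square roots mod primes, and by CRT mod any squarefree $N$ whose factorization is known. I would then build a \prob{Lonely}-style instance on the set of square-root witnesses: vertices are pairs $(a, x)$ with $x^2 \equiv a \pmod N$, and I pair each $(a,x)$ with $(a,-x)$ using the natural involution. The parity argument is seeded by a planted unpaired vertex obtained from a fixed canonical square, and any other unpaired vertex necessarily witnesses two square roots of the same element that are not negatives of each other, giving a factor.

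The main obstacle is the \cc{PPA} step: the pairing on $(a,x)$ must be definable \emph{without} knowing the factorization of $N$, while the analysis must \emph{use} the factorization to prove totality. Resolving this tension is the heart of Jeřábek's argument -- one defines the involution in a uniform, factorization-free way using only arithmetic mod $N$, and then argues by parity inside each Chinese-remainder component that the total number of unpaired vertices has the parity forced by the planted seed. Verifying that the construction stays polynomial-size and that GRH is invoked only in places where a \prob{Factor}-free subroutine suffices is the delicate part of the write-up.
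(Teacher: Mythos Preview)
The paper does not prove this lemma at all: it is stated with a citation to Je\v{r}\'abek's paper and then used as a black box in \cref{thm: primes_in_4} and \cref{thm: bertrand_win_win}. There is therefore no ``paper's own proof'' to compare your proposal against.

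As a sketch of Je\v{r}\'abek's actual argument, your \cc{PWPP}/\cc{PPP} outline via the squaring map is standard and essentially correct. Your \cc{PPA} outline, however, is not yet a proof: the involution $(a,x)\leftrightarrow(a,-x)$ that you describe has no unpaired vertices except those with $2x\equiv 0\pmod N$, which carries no factoring information, so the \prob{Lonely} instance as written is trivial. You acknowledge this gap yourself (``the pairing \ldots\ must be definable without knowing the factorization''), but you do not say what the correct pairing is. Je\v{r}\'abek's construction is genuinely different from what you sketch: it goes through the problem \textsc{FacRoot} (given $N$ and $a$, find either a factor of $N$ or a square root of $\pm a$ mod $N$) and builds the \cc{PPA} instance from a parity argument over quadratic characters, with GRH used to make the Jacobi-symbol-based case splits constructive. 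If you intend to include a proof here rather than a citation, that is the construction you would need to reproduce.
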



\begin{lemma}[\cite{korten2022derandomization}]
    \label{lem: korten-prime}
    \prob{Weak-Bertrand} reduces to $\prob{Lossy}^{\prob{Factor}}$ \footnote{Technically, \cite{korten2022derandomization} showed this for \prob{Lossy} given access to an oracle which outputs all prime factors of a number rather than one non-trivial factor. We observe that one can obtain all factors of a number by simply applying $\prob{Factor}$ multiple times.}.
\end{lemma}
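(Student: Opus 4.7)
The plan is to reduce \prob{Weak-Bertrand} on input $1^n$ to a $\prob{Lossy}^{\prob{Factor}}$ instance whose solutions encode primes in the range $(2^n, 2^{32n})$. Set $N := 32n$. I will design oracle circuits $C \colon \{0,1\}^N \to \{0,1\}^{N-1}$ and $D \colon \{0,1\}^{N-1} \to \{0,1\}^N$ (both with \prob{Factor} oracle gates) so that $D(C(x)) = x$ for every $2^n$-smooth integer $x$. Any \prob{Lossy} solution $x$ then fails to be $2^n$-smooth, hence has a prime factor exceeding $2^n$. Since $x < 2^N$, such a prime lies in the required Weak-Bertrand range, and a final appeal to \cref{lem: lossy-equiv} converts the $(N-1)$-\prob{Lossy} instance into an ordinary \prob{Lossy} instance.

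For the construction, $C$ on input $x$ invokes its \prob{Factor} gate iteratively to peel off prime factors, obtaining the complete factorization $x = p_1^{e_1} \cdots p_k^{e_k}$; since each factoring step strictly reduces the remaining bit-length, $O(N)$ queries suffice, so $C$ has polynomial size. If every $p_i \leq 2^n$, $C$ writes a canonical prefix-free encoding of the factorization into its $(N-1)$-bit output; otherwise it outputs a fixed sentinel string. The decompressor $D$ parses its input as a factorization and multiplies the prime powers back together (no \prob{Factor} queries are needed for multiplication). By design, $D(C(x)) = x$ whenever $x$ is $2^n$-smooth, so every \prob{Lossy} solution $x$ admits a prime factor $p > 2^n$. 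The $\prob{Lossy}^{\prob{Factor}}$ witness already contains the Factor-oracle answers used by $C$ to factor $x$, from which the reduction reads off $p$ and outputs it as a Weak-Bertrand solution; the footnote discrepancy is handled automatically since calling \prob{Factor} repeatedly recovers all prime factors, not just one.

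The main obstacle is realizing the succinct factorization encoding: it must squeeze into strictly fewer than $N$ bits while remaining computable by a polynomial-size circuit with \prob{Factor} gates. The set-theoretic injection exists because, by the Canfield-Erdős-Pomerance estimate for the Dickman function, the number of $2^n$-smooth integers below $2^{32n}$ is at most roughly $2^{32n}\cdot 32^{-32}$, which is comfortably below $2^{N-1}$ for all sufficiently large $n$ (and finitely many small $n$ are handled by hard-coding a table of primes into the reduction). The algorithmic challenge is that a rank-based encoding is infeasible, since computing the rank of $x$ among smooth numbers would require counting smooth numbers, for which no efficient algorithm is known even given \prob{Factor}. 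Instead, one uses a syntactic encoding that exploits the facts that the number of distinct prime factors of $x$ is $O(n/\log n)$ and that the exponents satisfy $\sum_i e_i \log_2 p_i \leq N$, combining a compact code for the prime indices with a Golomb-style code for the exponents; engineering this code so that the total length drops below $N-1$ bits is the technical heart of the argument and is carried out by \cite{korten2022derandomization}.
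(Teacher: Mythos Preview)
The paper does not supply its own proof of this lemma; it is quoted as a result of \cite{korten2022derandomization}, with only the footnote remarking that repeated calls to \prob{Factor} recover the full factorization. Your sketch is a faithful reconstruction of Korten's argument: build a compressor that uses the \prob{Factor} oracle to fully factor $x$ and, when $x$ is $2^n$-smooth, packs the factorization into fewer than $N$ bits, so that any \prob{Lossy} solution must carry a prime factor exceeding $2^n$, which can then be read off from the oracle-answer witnesses accompanying the solution. You also correctly isolate the one nontrivial ingredient---the explicit sub-$N$-bit encoding of smooth numbers---and appropriately defer it to the cited work, which is exactly how the paper treats it.
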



We are able to leverage \cref{lem: factoring-in-ppa} and \cref{lem: korten-prime} to give a classification of \prob{Weak-Bertrand} into our newly defined classes. Furthermore, the fact that \cc{PPA} is self-low and \prob{Factor} is in \cc{PPA} (under the Generalized Riemann Hypothesis) suggests that \prob{Weak-Bertrand} (or even \prob{Bertrand}) may be easily reducible to a \cc{PPA}-complete problem.
\primes
\begin{proof}
    Both the fact that \prob{Weak-Bertrand} is in $\cc{LOSSY^{{PPA}}}$ and $\cc{LOSSY^{{PPP}}}$ follow directly by combining \cref{lem: korten-prime} with \cref{lem: factoring-in-ppa}. Observing that \prob{Lossy} has a trivial relativizing reduction to the \cc{PPADS}-complete problem \prob{Injective-Pigeon} then implies \prob{Weak-Bertrand} is in $\cc{PPADS^{{PPA}}}$ and $\cc{PPADS^{{PPP}}}$.
\end{proof}

\primesPPA
\begin{proof}
    Apply \cref{lem: factoring-in-ppa} and \cref{thm: ppa_self_low}.
\end{proof}

One interpretation of the above theorem is as follows: if $\cc{PPA}$ is sufficiently powerful to capture or `derandomize' $\cc{LOSSY}$, then it also captures \prob{Weak-Bertrand}.

\subsection{Self-lowness}
We next observe that self-lowness is a property of classes which is preserved under reduction. We might therefore hope for a theorem which says that if $\cc{A}$ is self-low and $\cc{B}$ is not self-low, then $\cc{A}$ and $\cc{B}$ must be separate classes. However, since our definitions only make sense under black-box reductions for $\cc{A}$, we get a version of this theorem which only holds under black-box reductions. Still, this indicates that our techniques may be used to show black-box separations between classes $\cc{A}$ and $\cc{B}$ by showing that $\cc{A}$ is self-low under black-box reductions and $\cc{B}$ is not.

\begin{theorem}
    \label{thm: self_low_sep}
    Let $\cc{A}$ be a Turing-closed and self-low $\cc{TFNP}$ subclass ($\cc{FP}^{\cc{A}} = \cc{A}$ and $\cc{A}^* = \cc{A}$) under black-box reductions. If $\cc{B}$ is a $\cc{TFNP}$ subclass which is not self-low under black-box reductions, then $\cc{B} \neq \cc{A}$ under black-box reductions.
\end{theorem}
\begin{proof}
    Assume for the sake of contradiction that there is a black-box reduction from $\cc{B}$ to $\cc{A}$. Then, by \cref{cor: A_to_A'} and \cref{thm: B_to_B'}, there is a black-box reduction from $\cc{B}^{\cc{B}}$ to $\cc{A}^{\cc{A}}$ under $\cc{FP}^{\cc{B}}$ reductions, which itself reduces to $\cc{A}$. Therefore, $\cc{B^B}$ reduces to $\cc{A}$ under $\cc{FP^A}$ reductions. Therefore, $\cc{B^B} \subseteq \cc{FP^A} \subseteq \cc{A} \subseteq \cc{B}$, as desired.
\end{proof}

\section{Acknowledgements}
\znote{remember to turn off ack}
The authors would like to thank Karthik Gajulapalli, Sidhant Saraogi, and Noah Stephens-Davidowitz
for many helpful discussions and feedback on an earlier draft of this manuscript. 
The authors would also like to thank the anonymous referees for useful comments.

\bibliographystyle{alpha}
\bibliography{refs}

\end{document}